\newtheorem{theorem}{Theorem}[section]
\newtheorem{lemma}[theorem]{Lemma}
\newtheorem{proposition}[theorem]{Proposition}
\newtheorem{assumption}{Assumption}
\theoremstyle{definition}
\theoremstyle{remark}
\newtheorem{remark}[theorem]{Remark}
\newtheorem{rhp}{Riemann-Hilbert Problem}
\newcommand{\rhref}[1]{Riemann-Hilbert Problem~\ref{#1}}
\let\Im=\undefined\DeclareMathOperator{\Im}{Im}
\newcommand{\dd}{\ensuremath{\,\mathrm{d}}}
\newcommand{\ii}{\ensuremath{\mathrm{i}}}
\newcommand{\ee}{\ensuremath{\,\mathrm{e}}}
\newcommand{\dummy}{\mathbf{H}}
\renewcommand*\env@matrix[1][\arraystretch]{%
  \edef\arraystretch{#1}%
  \hskip -\arraycolsep
  \let\@ifnextchar\new@ifnextchar
  \array{*\c@MaxMatrixCols c}}
\let\originalleft\left
\let\originalright\right
\renewcommand{\left}{\mathopen{}\mathclose\bgroup\originalleft}
\renewcommand{\right}{\aftergroup\egroup\originalright}
\title{Broader Universality of Rogue Waves of Infinite Order}
\author{Deniz Bilman}
\address{Deniz Bilman:  Department of Mathematical Sciences, University of Cincinnati, Cincinnati, OH, USA}
\email{bilman@uc.edu}
\author{Peter D.~Miller}
\address{Peter D. Miller:  Department of Mathematics, University of Michigan, Ann Arbor, MI, USA}
\email{millerpd@umich.edu}
\thanks{Bilman was supported by the National Science Foundation under grant number DMS-2108029.
Miller was supported by the National Science Foundation under grant number DMS-1812625.
}
\keywords{Rogue waves;  Inverse scattering transform;  Nonlinear Schr\"odinger equation.}
\date{\today}
\begin{document}
\begin{abstract}
We show that the same special solution of the focusing nonlinear Schr\"odinger equation that has been shown to arise in a certain near-field/large-order limit from soliton and Peregrine-like rogue wave solutions actually arises universally from an arbitrary background solution when subjected to a sequence of iterated B\"acklund transformations.
\end{abstract}
\maketitle
\section{Introduction}
This paper concerns \emph{extreme superposition} of soliton-like 
solutions of the focusing nonlinear Schr\"odinger (NLS) equation
\begin{equation}
\ii q_t +\frac{1}{2}q_{xx} + |q|^2 q = 0, \qquad(x,t)\in\mathbb{R}^2,
\label{nls}
\end{equation}
on a class of arbitrary background fields given in terms of representations in the form of a Riemann-Hilbert problem. The focusing NLS equation in the form \eqref{nls} is the $\lambda$-independent compatibility condition for the simultaneous linear equations of a Lax pair
\begin{equation}
%\begin{alignat}{2}
%\mathbf{u}_{x} &=\mathbf{X}(\lambda ; x, t) \mathbf{u}, &&\qquad \mathbf{X}(\lambda ; x, t):=\begin{bmatrix}-\ii \lambda & q \\ -q^* & \ii \lambda\end{bmatrix},\label{Lax-x}\\
%\mathbf{u}_{t} &=\mathbf{T}(\lambda ; x, t) \mathbf{u}, &&\qquad \mathbf{T}(\lambda ; x, t):=\begin{bmatrix}-\ii  \lambda^{2}+\ii  \frac{1}{2}|q|^{2} & \lambda q+\ii  \frac{1}{2}  q_{x} \\ -\lambda  q^{*}+\ii  \frac{1}{2}  q_{x}^{*} & \ii  \lambda^{2}-\ii  \frac{1}{2}| q|^{2} \end{bmatrix},\label{Lax-t}
%\end{alignat}
\mathbf{u}_{x} = \begin{bmatrix}-\ii \lambda & q \\ -q^* & \ii \lambda\end{bmatrix} \mathbf{u},\qquad
\mathbf{u}_{t} = \begin{bmatrix}-\ii  \lambda^{2}+\ii  \frac{1}{2}|q|^{2} & \lambda q+\ii  \frac{1}{2}  q_{x} \\ -\lambda  q^{*}+\ii  \frac{1}{2}  q_{x}^{*} & \ii  \lambda^{2}-\ii  \frac{1}{2}| q|^{2} \end{bmatrix}\mathbf{u},\label{Lax-pair}
\end{equation}
governing an auxiliary column vector $\mathbf{u}=\mathbf{u}(\lambda;x,t)$ and the spectral parameter $\lambda\in\mathbb{C}$.

{
The NLS equation \eqref{nls} is well-known to be a universal amplitude equation for weakly-nonlinear and nearly monochromatic wavetrains in dispersive conservative systems \cite{BenneyN1967}.  In recent years, there has been increasing interest in \emph{rogue waves} in various physical systems.  These waves are informally characterized as having a relatively large amplitude in a space-time localized region.   The exact solution 
\begin{equation}
q(x,t)=\ee^{\ii t}\left[1-4\frac{1+2\ii t}{1+4x^2+4t^2}\right]
\end{equation}
of \eqref{nls} found in 1983 by D. H. Peregrine \cite{Peregrine1983} has this property, reaching a peak amplitude of $|q(0,0)|=3$ and decaying in all space-time directions to the background amplitude of $|q_0(x,t)|=1$.  There is now a vast body of literature devoted to generalizing this exact solution to a family of ``higher-order'' versions thereof (see, for instance, \cite{GuoTYLW2017} and references therein).  These solutions all decay for large $(x,t)$ to the background solution $q_0(x,t)=\ee^{\ii t}$ (representing a uniform train of Stokes waves in the underlying physical system for which \eqref{nls} is the amplitude equation) and they have an algebraic character, being derived from iterated Darboux transformations or Hirota's bilinear method, both of which lead to determinantal formul\ae.  At the same time, many researchers have also investigated other types of solutions that can exhibit large space-time localized bursts of amplitude without necessarily decaying to a constant-amplitude background state for large $(x,t)$.  An example of such a study in the context of the modified Korteweg-de Vries (mKdV) equation is \cite{SlunyaevP2016}; in this paper the authors consider colliding solitons as a mechanism for the generation of rogue waves near the location and instant of collision. The idea is that, by tuning the parameters of the colliding solitons, one can maximize the amplitude burst.  It turns out that the largest amplitude burst is generated by collisions of solitons of alternating signs (mKdV solitons are characterized by an arbitrary sign).  Another related work is \cite{KevrekidisLS2020}, in which the approach taken is to modify the model equation so that a Peregrine-like exact solution of fourth order can achieve an amplitude burst of nearly a thousand times the background level with appropriate choice of its parameters.  Such a large-amplitude burst is made possible in this context because the model equation involves two coupled fields and has a sign-indefinite conserved ``norm'' which allows both fields to become large while the norm is conserved.
}

{
Another approach to finding solutions of \eqref{nls} modeling rogue waves with very large amplitudes is to consider exact solutions of increasingly-high order.  It has been shown that if the parameters of the $k^\mathrm{th}$-order Peregrine-type solution $q=q_k(x,t)$ of \eqref{nls} are chosen to concentrate the solution near a chosen focal point $(x_0,t_0)$, then the maximum amplitude $|q_k(x_0,t_0)|$ is equal to $2k+1$, and hence can be made arbitrarily large (see, for instance, \cite{WangYWH2017}).  In \cite{BilmanLM2020}, this so-called ``fundamental'' rogue wave solution was investigated in the \emph{near-field limit} in which the space-time coordinates are scaled near the focal point $(x_0,t_0)$ as the order increases.  Not only does this analysis reproduce the previously-known maximal amplitude result, but more significantly it reveals that in the near-field/high-order limit the scaled field $k^{-1}q_k(x,t)$ converges as $k\to\infty$ to a limiting solution $Q(X,T)$ of \eqref{nls} written in the rescaled coordinates $(X,T)$ termed the \emph{rogue wave of infinite order}.  It is a solution decaying to zero in all directions of space-time (although quite slowly; $Q(X,T)=O(X^{-3/4})$ as $X\to\pm\infty$ for fixed $T$, and $Q(X,T)=O(T^{-1/3})$ as $T\to\pm\infty$ for fixed $X$) and hence it may be viewed as a \emph{rogue wave on the zero background}.  Just as large amplitudes can be achieved in solutions of \eqref{nls} by mechanisms other than those modeled by Peregrine-type solutions, the same solution $Q(X,T)$ can also be generated by mechanisms like soliton interactions.  Indeed, in \cite{BilmanB2019} a near-field scaling limit of soliton solutions of \eqref{nls} (satisfying zero boundary conditions $q(x,t)\to 0$ as $x\to\pm\infty$) corresponding to a single complex-conjugate pair of high-order eigenvalues of the Zakharov-Shabat operator reveals a one-parameter generalization of $Q(X,T)$ as a limiting profile.  In \cite{BilmanM2021}, the high-order solitons and Peregrine-type rogue waves were embedded into a single family of solutions of \eqref{nls} with a continuous parameter $M$ representing the order such that for $M=\tfrac{1}{2}k$ the solution is a soliton solution of order $k$ while for $M=\tfrac{1}{4}+\tfrac{1}{2}k$ it is instead a Peregrine-type fundamental rogue wave solution of order $k$.  Then it was shown that the same near-field limit solution $Q(X,T)$ arises from this solution family in the limit $M\to\infty$ even through sequences for which the solution is neither an exact soliton solution nor an exact Peregrine-type solution.  Moreover, in the same paper the universal asymptotic behavior in the limit $M\to\infty$ was extended to a \emph{far-field regime} in which the independent variables $(x,t)$ are allowed to become large in such a way that rescaled coordinates $(\chi,\tau):=(x/M,t/M)$ lie within a definite bounded domain.
}

{
Both the high-order solitons and the high-order Peregrine-type solutions of \eqref{nls} arise from iterated B\"acklund transformations applied to a background solution $q_0(x,t)$, also called a ``seed'' ($q_0(x,t)\equiv 0$ for solitons, and $q_0(x,t)\equiv \ee^{\ii t}$ for Peregrine-type solutions), and for both types of solutions, the near-field asymptotic behavior is described by the same function $Q(X,T)$ (or a one-parameter generalization thereof).  In this paper, we demonstrate that this universality is by no means restricted to elementary background solutions.  Indeed, we show that for a wide class of background solutions $q_0(x,t)$ serving as a seed for iterated B\"acklund transformations, the same rogue wave of infinite order arises in the near-field/high-order limit.  This shows that, in a sense, \emph{all rogue wave solutions of \eqref{nls} having sufficiently large amplitude look the same near the focal point $(x_0,t_0)$}.  
}

{
In Section~\ref{sec:background}, we describe a class of background fields $q=q_0(x,t)$ satisfying \eqref{nls} that will serve as seeds for iterated B\"acklund transformations. Then, in Section~\ref{sec:superposition}, we implement the iterated B\"acklund transformations (and the coincident Darboux transformations of associated eigenfunctions of the Zakharov-Shabat problem), and show how to pass to the near-field/high-order limit, proving rigorously that the same solution $Q(X,T)$ characterizes the limit regardless of the background field.  Our main result is then formulated as Theorem~\ref{thm:main}.
}

\subsubsection*{On notation} 
Boldface capital letters denote matrices and boldface lowercase letters denote vectors.
We use $a^*$ to denote the complex conjugate of a quantity $a\in\mathbb{C}$ and for a matrix $\mathbf{A}$, $\mathbf{A}^*$ denotes the component-wise complex conjugate without the transpose. We use $\mathbf{A}^\dagger$ to denote the conjugate transpose of a matrix $\mathbf{A}$. For a set $S\subset{\mathbb{C}}$, we use $|S|$ to denote the set of pointwise moduli $|S|:=\{ |z| \colon z\in S\}$; for a curve $L\in\mathbb{R}^2$ we use $|L|$ to denote the arc length of $L$. It will be clear from context what $| \diamond |$ stands for. {We denote the standard Pauli spin matrices by
\begin{equation}
\sigma_1:=\begin{bmatrix}0&1\\1&0\end{bmatrix},\quad\sigma_2:=\begin{bmatrix}0 & -\ii\\\ii & 0\end{bmatrix},\quad
\sigma_3:=\begin{bmatrix}1&0\\0 & -1\end{bmatrix}.
\end{equation}
For a matrix-valued function $\lambda\mapsto\mathbf{M}(\lambda)$ analytic on the complement of an oriented arc $\Sigma$ and a given point $\zeta\in\Sigma$ we denote the boundary value taken by $\mathbf{M}(\lambda)$ as $\lambda\to\zeta$ from the left (resp., right) of $\Sigma$ as $\mathbf{M}_+(\zeta)$ (resp., $\mathbf{M}_-(\zeta)$).
}
%}

\section{Background Fields}
\label{sec:background}
We make the following assumptions for a contour $\Sigma$ in the complex plane and a jump matrix $\mathbf{V}_0(\zeta)$ supported on $\Sigma$.  
We let $\|\diamond\|$ denote an arbitrary vector norm on $\mathbb{C}^2$ and we use the same notation for the induced norm on $2\times 2$ matrices over $\mathbb{C}$.
%Let $\| \diamond \|$ denote the matrix norm induced from an arbitrary vector norm on $\mathbb{C}^2$.

\begin{assumption}[Jump contour]
\label{a:jump-contour}
We assume that $\Sigma$ is a Schwarz-symmetric oriented contour in $\mathbb{C}$ with the following properties
\begin{itemize}
\item $\mathbb{R}\subseteq {\Sigma}$ and this is the only unbounded component of $\Sigma$, which we orient from $-\infty$ to $+\infty$.
\item $\Sigma \cap \mathbb{C}^+$ consists of finitely many pairwise-disjoint Jordan curves ${\Sigma}_j^+$, $j=1,2,\ldots,N$, which we refer to as \emph{loops}, and each loop $\Sigma_j^+$ is oriented clockwise.
\item Due to the aforementioned Schwarz symmetry, $\Sigma \cap \mathbb{C}^-$ consists of loops ${\Sigma}_j^-$ that are oriented counter-clockwise and that satisfy $({\Sigma}_j^-)^*={\Sigma}_j^+$ as subsets of the complex plane.
\end{itemize}
\end{assumption}

Note that $\Sigma$ is a complete oriented contour in the sense that it divides the complex plane into two complementary regions, $\mathbb{C}=\Omega^+\sqcup\Sigma\sqcup\Omega^-$, where $\Omega^+$ (resp.\@ $\Omega^-$) denotes the union of the regions that lie to the left (resp.\@ right) side of $\Sigma$ with respect to its orientation.

\begin{assumption}[Jump matrix] 
\label{a:jump-matrix}
We assume that $\mathbf{V}_0(\zeta)$ is a $2\times 2$ complex matrix valued function defined for all $\zeta\in\Sigma$ satisfying for some constants $K>0$, $K'>0$, and $0<\nu < 1$ the following properties:
\begin{itemize}
\item \emph{Unimodularity:} $\det(\mathbf{V}_0(\zeta))\equiv 1$ for all $\zeta\in\Sigma$.
\item \emph{H\"older continuity/decay to identity:} $\| \mathbf{V}_0(\zeta_1) - \mathbf{V}_0(\zeta_2) \| \leq K |\zeta_1 - \zeta_2|^\nu$, for any $\zeta_1,\zeta_2$ belonging to the same smooth component of $\Sigma$ (i.e., to a loop or $\mathbb{R}$).
%, where $K$ is independent of $\zeta_1,\zeta_2 $. 
There exists $\varepsilon\ge \nu$ such that for $\zeta_1, \zeta_2 \in \mathbb{R}$ with $|\zeta_1|\geq1$ and $|\zeta_2|\geq 1$,
\begin{equation}
\| \zeta_1^{4+\varepsilon}(\mathbf{V}_0(\zeta_1)-\mathbb{I}) - \zeta_2^{4+\varepsilon}(\mathbf{V}_0(\zeta_2)-\mathbb{I})\|  \leq K' \left|\frac{1}{\zeta_1} - \frac{1}{\zeta_2}\right|^\nu.
\label{Holder-infinity}
\end{equation}
%where $K'$ is independent of $\zeta_1,\zeta_2$.

%$\| \mathbf{V}_0(\zeta_1) - \mathbf{V}_0(\zeta_2)\| |\zeta_1 \zeta_2|^\nu \leq K' |\zeta_1 - \zeta_2|^\nu$, for any $\zeta_1,\zeta_2\in\mathbb{R}$ with $|\zeta_1|\geq 1$ and $|\zeta_2|\geq 1$, where $K'$ is independent of $\zeta_1,\zeta_2$.
\item \emph{Schwarz reflection symmetry:} For all $\zeta\in\mathbb{R}$, $\mathbf{V}_0(\zeta) +\mathbf{V}_0(\zeta)^\dagger$ is strictly positive definite. For all $\zeta \in \Sigma \setminus \mathbb{R}$, $\mathbf{V}_0(\zeta^*)^\dagger = \mathbf{V}_0(\zeta)$.
\end{itemize}
\end{assumption}
Note that in particular we have $\|\mathbf{V}_0(\zeta)-\mathbb{I}\|=O(|\zeta|^{-4-\varepsilon-\nu})$ which is dominated by $O(|\zeta|^{-4-2\nu})$ as $|\zeta|\to\infty$.
We denote by $\mathbf{V}_0^\mathbb{R}(\zeta)$ the restriction of $\mathbf{V}_0(\zeta)$ on $\mathbb{R}$,
%\begin{equation}
%\mathbf{V}^\mathbb{R}(\zeta):=\mathbf{V}(\zeta)\Big\rvert_{\zeta\in\mathbb{R}}.
%\end{equation}
fix $\nu\in(0,1)$ to be the H\"older exponent of the jump matrix $\mathbf{V}_0(\zeta)$ and consider the following Riemann-Hilbert problem.
\begin{rhp}[Background Field]\label{rhp:background}
Find a $2\times 2$ matrix valued function $\mathbf{M}(\lambda;x,t)$ with the following properties.
\begin{itemize}
\item {\bf Analyticity:} $\mathbf{M}(\lambda;x,t)$ is analytic in $\lambda$ for $\lambda \in \mathbb{C}\setminus \Sigma$.
\item {\bf Jump conditions:} $\mathbf{M}(\lambda;x,t)$ admits H\"older continuous boundary values on $\Sigma$ for all exponents $\mu<\nu$ as $\lambda\to \zeta$ for $\zeta\in\Sigma$, and these boundary values are related by the following jump condition
\begin{equation}
\mathbf{M}_+(\zeta;x,t)=\mathbf{M}_-(\zeta;x,t) \ee^{-\ii \zeta (x + \zeta t)\sigma_3} \mathbf{V}_0(\zeta) \ee^{\ii \zeta (x + \zeta t)\sigma_3},\qquad \zeta \in \Sigma,
\end{equation}
where the boundary values
\begin{equation*}
\mathbf{M}_{\pm}(\zeta;x,t)  :=\lim_{\substack{{\lambda\to\zeta}\\{\lambda\in\Omega^\pm}}} \mathbf{M}(\lambda;x,t)
\end{equation*}
exist independently of the path of approach.
\item {\bf Normalization:} $\mathbf{M}(\lambda;x,t) \to \mathbb{I}$ as $\lambda\to\infty$ in $\mathbb{C}\setminus \Sigma$.
\end{itemize}
\end{rhp}

The jump contour and the jump matrix associated with \rhref{rhp:background} satisfy all the hypotheses of Zhou's \emph{vanishing lemma} \cite[Theorem 9.3]{Zhou89} thanks to the properties listed in Assumption~\ref{a:jump-contour} and Assumption~\ref{a:jump-matrix}. Therefore, \rhref{rhp:background} has a unique solution $\mathbf{M}(\lambda)= \mathbf{M}(\lambda;x,t)$ for any $(x,t)\in\mathbb{R}^2$, and necessarily, $\det(\mathbf{M}(\lambda;x,t))\equiv 1$. The decay rate assumption on $\mathbf{V}_0(\zeta)-\mathbb{I}$ as $\zeta\to\infty$ in Assumption~\ref{a:jump-matrix} also provides differentiability of $\mathbf{M}(\lambda;x,t)$ with respect to $x$ and $t$ since $\zeta^2(\mathbf{V}_0(\zeta)-\mathbb{I}) = O(|\zeta|^{-2-2\nu})$ as $|\zeta|\to \infty$ on $\mathbb{R}$ (see, for example, \cite[Lemma 3.8]{TrogdonO-book}). Thus, it follows by a \emph{dressing} calculation (see, for instance, \cite[Section 3.2]{BilmanM2019}) that \rhref{rhp:background} defines a global solution $q=q_0(x,t)$ of the focusing NLS equation \eqref{nls} defined by the residue at $\lambda=\infty$:
\begin{equation}
q_0(x,t):= 2\ii  \lim_{\lambda\to\infty} \lambda M_{12}(\lambda;x,t).
\label{q0-def}
\end{equation}

We call $q_0(x,t)$ the background field in the context of this paper.  The conditions of \rhref{rhp:background} are sufficiently general for the set of background fields $q_0(x,t)$ to include:
\begin{itemize}
\item Generic solutions of the Cauchy initial-value problem for \eqref{nls} with zero boundary conditions at infinity (say with initial data in the Schwartz space $\mathscr{S}(\mathbb{R})$).  Here by generic, we mean to exclude only real spectral singularities, i.e., real zeros of the scattering coefficient $a(\lambda)$ (reciprocal of the transmission coefficient).  However, initial data generating finitely-many complex eigenvalues of arbitrary algebraic multiplicities are taken into account via the jump conditions on the various loops $\Sigma_j^\pm$; one may start with the Beals-Coifman solution of the Zakharov-Shabat problem which has a pole at each eigenvalue with order equal to the algebraic multiplicity and then introduce a local substitution within the corresponding loop based on the discrete spectral data for the pole (in the case of a simple pole, this is just the connection coefficient or norming constant) to remove it at the cost of a jump on the loop.
\item Certain analogues of \emph{primitive potentials} \cite{DyachenkoZZ16}. These are solutions (so far mostly studied for KdV-type equations) that are designed as models for soliton turbulence; they are constructed through a limiting process of inserting increasingly many eigenvalues that accumulate in the limit along a curve in the upper half-plane, along with its image in the lower half-plane.  The limiting process results in a Schwarz-symmetric jump condition across one or more arcs (``condensation'' of the eigenvalue poles), possibly in addition to a jump on the real line from a nonzero reflection coefficient (usually omitted).  In cases where the resulting jumps on all arcs tend to the identity at the arc endpoints, we may fit this into the framework of \rhref{rhp:background} by ``completing'' each arc to form a loop $\Sigma_j^\pm$ by taking the jump matrix to be the identity on the complement of the arc in the loop.
\end{itemize}
{Ultimately, we may also consider arbitrary solutions of \eqref{nls} satisfying nonzero boundary conditions at infinity as background fields, although for that purpose it is easier to start with a different Riemann-Hilbert formulation; see Remark~\ref{rem:nonzero-background} below.}

There are two properties of the solution of \rhref{rhp:background} that are essential for our purposes given in the following two propositions.

\begin{proposition}
For each $(x_0,t_0)\in\mathbb{R}^2$, the solution of \rhref{rhp:background} satisfies $\mathbf{M}(\lambda;x_0,t_0) \to \mathbb{I}$ as $\lambda\to\infty$ in $\mathbb{C}\setminus \Sigma$ uniformly in all directions.
\label{prop:independence}
\end{proposition}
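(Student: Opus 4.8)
The plan is to represent the solution of \rhref{rhp:background} by its Cauchy integral and to show that the only delicate contribution---that from the unbounded component $\mathbb{R}$ of $\Sigma$---still tends to $\mathbb{I}$ even when $\lambda\to\infty$ tangentially to the real axis, which is the genuine content beyond the normalization condition (note that $\infty$ is a limit point of $\Sigma$, so uniformity in direction is not automatic). Write the conjugated jump as $\mathbf{V}(\zeta):=\ee^{-\ii\zeta(x_0+\zeta t_0)\sigma_3}\mathbf{V}_0(\zeta)\ee^{\ii\zeta(x_0+\zeta t_0)\sigma_3}$. Since \rhref{rhp:background} is solvable with unit determinant and has H\"older boundary values, subtracting the jump and applying the Plemelj--Sokhotski formula gives
\begin{equation*}
\mathbf{M}(\lambda;x_0,t_0)=\mathbb{I}+\frac{1}{2\pi\ii}\int_\Sigma\frac{\mathbf{M}_-(\zeta;x_0,t_0)\left(\mathbf{V}(\zeta)-\mathbb{I}\right)}{\zeta-\lambda}\dd\zeta,\qquad\lambda\in\mathbb{C}\setminus\Sigma,
\end{equation*}
where I would first record that $\mathbf{M}_-$ is bounded on $\Sigma$ (from the Beals--Coifman solution together with the H\"older continuity of the boundary values and the rapid decay of $\mathbf{V}_0-\mathbb{I}$). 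Because the phase factors are unimodular on $\mathbb{R}$, the integrand inherits the decay $\|\mathbf{M}_-(\mathbf{V}-\mathbb{I})\|=O(|\zeta|^{-4-2\nu})$ guaranteed by Assumption~\ref{a:jump-matrix}, so the density is integrable on $\mathbb{R}$.

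The contribution of the loops $\Sigma_j^\pm$ is harmless: they are compact and bounded away from $\infty$, so $|\zeta-\lambda|^{-1}=O(|\lambda|^{-1})$ uniformly there and this part is $O(|\lambda|^{-1})$ as $\lambda\to\infty$ independently of $\arg\lambda$. The real work is the integral over $\mathbb{R}$. Writing $\lambda=a+\ii b$ with $a=\Re\lambda$, the case $|a|\le\tfrac12|\lambda|$ is immediate, since then $|b|\ge\tfrac{\sqrt3}{2}|\lambda|$ and $|\zeta-\lambda|\ge|b|$ force the integral to be $O(|\lambda|^{-1})$. In the remaining case $|a|>\tfrac12|\lambda|$ I would split $\mathbb{R}$ into the ``near'' interval $|\zeta-a|\le\tfrac12|a|$ and its complement. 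On the complement one has $|\zeta-\lambda|\ge\tfrac12|a|$, and combining this with the $L^1$ bound and the pointwise decay of the density shows the far part is $O(|a|^{-1})=O(|\lambda|^{-1})$.

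The crux---and the only place where tangential approach ($b\to0$) threatens uniformity---is the near interval, where the naive bound $|\zeta-\lambda|^{-1}\le|b|^{-1}$ is useless. Here I would add and subtract the value of the density at $\zeta=a$. The constant piece integrates explicitly to $\int_{|\zeta-a|\le|a|/2}(\zeta-\lambda)^{-1}\dd\zeta=\log\bigl((\tfrac12|a|-\ii b)/(-\tfrac12|a|-\ii b)\bigr)$, whose numerator and denominator have equal modulus, so it is purely imaginary and bounded by $2\pi$ \emph{uniformly in $b$}; multiplied by the small factor $\|\mathbf{M}_-(\mathbf{V}-\mathbb{I})\|=O(|a|^{-4-2\nu})$ it vanishes as $|a|\to\infty$. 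The remaining difference piece is controlled by the local H\"older seminorm of the density near $a$ through $\int_{|\zeta-a|\le|a|/2}|\zeta-a|^{\mu-1}\dd\zeta$, which converges \emph{without any help from $b$}; the point is that the H\"older-at-infinity hypothesis \eqref{Holder-infinity}, together with the rapid decay, makes this seminorm small enough---even after accounting for the oscillatory phase, whose Lipschitz constant grows only linearly in $|a|$ and is dominated by the high-order vanishing of $\mathbf{V}_0-\mathbb{I}$---that the entire near contribution is $o(1)$ as $|a|\to\infty$, uniformly in $b$.

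The main obstacle is exactly this tangential regime: controlling the Cauchy transform when $\lambda$ limits onto $\mathbb{R}$ at a point of large modulus, where the kernel is nearly singular. The two mechanisms that defeat it---the exact cancellation in the Cauchy integral of a constant density and the improved integrability furnished by \eqref{Holder-infinity}---are what convert the otherwise logarithmically divergent estimate into a uniform $o(1)$ bound, and this is precisely why Assumption~\ref{a:jump-matrix} demands both rapid decay and H\"older regularity of $\mathbf{V}_0-\mathbb{I}$ at infinity.
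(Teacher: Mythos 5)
Your estimates are correct as far as they go, and you have correctly identified the crux (uniformity as $\lambda\to\infty$ tangentially to $\mathbb{R}$); but the proof rests on an input that you merely ``record'' without justification and that is not actually available at that point: the assertion that $\mathbf{M}_-(\diamond;x_0,t_0)$ is \emph{bounded} on all of $\Sigma$, together with uniform control of its local H\"older seminorm on the intervals $|\zeta-a|\le\tfrac12|a|$, which your ``difference piece'' estimate needs. Neither fact follows from the conditions of \rhref{rhp:background}: H\"older continuity of the boundary values, even with globally finite seminorm, does not imply boundedness on an unbounded contour (the function $\zeta\mapsto|\zeta|^{\mu}$ is globally H\"older-$\mu$ and unbounded), and the existence theory behind the problem (Zhou's vanishing lemma) produces boundary values in $\mathbb{I}+L^2(\Sigma)$, which suffices neither for your Cauchy representation with a pointwise-controlled density, nor for evaluating $\rho(a)$, nor for the seminorm bound on the near interval. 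Controlling the boundary values up to the point at infinity of $\Sigma$ is precisely the hard part of the proposition---indeed the proposition itself implies $\mathbf{M}_-(\zeta)\to\mathbb{I}$ as $\zeta\to\pm\infty$---so assuming boundedness and H\"older control there makes the argument circular in exactly the delicate regime you set out to handle.

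The paper closes this hole before it opens, by compactifying: the M\"obius map \eqref{Mobius} sends $\lambda=\infty$ to $z=1$; Lemma~\ref{lem:HoelderJumpGamma} (which is where \eqref{Holder-infinity} is really spent) shows the transplanted jump lies in $C^\nu(\Gamma)$ including at $z=1$; Fredholm theory in H\"older spaces on the compact contour $\Gamma$ gives a density $\mathbf{X}\in C^\mu(\Gamma)$; and the Plemelj--Privalov theorem then makes $\mathbf{N}$ H\"older continuous up to the boundary on each component of $\mathbb{C}\setminus\Gamma$. Since the jump at $z=1$ is the identity, $\mathbf{N}$ is continuous there, and $\mathbf{M}(\lambda)=\mathbf{N}(1;x_0,t_0)^{-1}\mathbf{N}(z(\lambda);x_0,t_0)\to\mathbb{I}$ uniformly in all directions simply because $z(\lambda)\to 1$ in the spherical sense. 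Your direct kernel estimates (the case split in $a=\Re\lambda$, the subtraction of $\rho(a)$, the purely imaginary logarithm) are a legitimate alternative route to the same conclusion, but only \emph{after} the boundedness and H\"older control of $\mathbf{M}_-$ near $\zeta=\infty$ have been established---and the known way to establish them is exactly the compactified H\"older/Fredholm theory you tried to bypass (which in fact gives the stronger statement that the local seminorm of $\mathbf{M}_-$ on $|\zeta-a|\le\tfrac12|a|$ decays like $|a|^{-2\mu}$, since $|z(\zeta_1)-z(\zeta_2)|=O(|a|^{-2})|\zeta_1-\zeta_2|$ there). So either cite that theory to supply the regularity of $\mathbf{M}_-$ and keep your estimates, or adopt the compactification from the start, in which case the uniform limit is immediate and your estimates become unnecessary.
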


\begin{proposition}
%Let $\Omega\subset\mathbb{R}^2$ be a simply connected domain containing the point $(x_0,t_0)$ such that $q_0(x,t)$ defined by $\eqref{q0-def}$ is a solution of the nonlinear Schr\"odinger equation \eqref{nls} for $(x,t)\in\Omega$. 
Fix $(x_0,t_0)\in\mathbb{R}^2$.
Then $\sup_{\lambda\in\mathbb{C}\setminus\Sigma} \|\mathbf{M}(\lambda;x,t)-\mathbf{M}(\lambda;x_0,t_0)\|\to 0$ as $(x,t)\to(x_0,t_0)$ 
%in $\Omega$.
in $\mathbb{R}^2$.
%map $(\lambda,x,t)\mapsto\mathbf{M}(\lambda;x,t)$ is uniformly continuous on $(\mathbb{C}\setminus\Sigma)\times\Omega$. 
\label{prop:continuity}
\end{proposition}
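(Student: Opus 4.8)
The plan is to exploit that the solution of \rhref{rhp:background} depends on the parameters $(x,t)$ only through the jump matrix
\[
\mathbf{V}(\zeta;x,t)\defeq \ee^{-\ii\zeta(x+\zeta t)\sigma_3}\mathbf{V}_0(\zeta)\ee^{\ii\zeta(x+\zeta t)\sigma_3},\qquad \zeta\in\Sigma,
\]
and to compare the two solutions through the Beals--Coifman integral equation. Fixing a factorization $\mathbf{V}=(\mathbf{b}_-)^{-1}\mathbf{b}_+$ and setting $\mathbf{w}_\pm\defeq\pm(\mathbf{b}_\pm-\mathbb{I})$, $\mathbf{w}\defeq\mathbf{w}_++\mathbf{w}_-$, one has $\mathbf{M}=\mathbb{I}+\mathcal{C}(\boldsymbol{\mu}\mathbf{w})$, where $\mathcal{C}$ is the Cauchy transform on $\Sigma$ and the density $\boldsymbol{\mu}$ solves $(\mathbb{I}-\mathbf{C}_{\mathbf{w}})\boldsymbol{\mu}=\mathbb{I}$ with $\mathbf{C}_{\mathbf{w}}$ the usual associated singular integral operator. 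I would carry out this construction not in $L^2(\Sigma)$ but in a Banach space $\mathcal{X}$ of H\"older-continuous functions on $\Sigma$ with a prescribed algebraic decay as $\zeta\to\infty$ along $\mathbb{R}$, chosen so that (by Plemelj--Privalov) $\mathcal{C}$ maps $\mathcal{X}$ boundedly into $L^\infty(\mathbb{C}\setminus\Sigma)$ with boundary values attained uniformly up to $\Sigma$. By Assumption~\ref{a:jump-contour} and Assumption~\ref{a:jump-matrix}, Zhou's vanishing lemma applies for every $(x,t)\in\mathbb{R}^2$, so $\mathbb{I}-\mathbf{C}_{\mathbf{w}}$ is invertible on $\mathcal{X}$ for every $(x,t)$.

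The first substantive step is to show that $(x,t)\mapsto\mathbf{w}(\cdot;x,t)$ is continuous into $\mathcal{X}$ at $(x_0,t_0)$. Here the crucial structural observation is that conjugation by $\ee^{\pm\ii\zeta(x+\zeta t)\sigma_3}$ leaves the diagonal entries of $\mathbf{V}_0$ untouched and multiplies the off-diagonal entries by $\ee^{\mp2\ii\theta}$, where $\theta\defeq\zeta(x+\zeta t)$; thus the entire dependence on $(x,t)$ sits in these scalar phases. On $\mathbb{R}$ the phase is real and $|\ee^{-2\ii\theta}-\ee^{-2\ii\theta_0}|\le 2|\theta-\theta_0|=2|\zeta(x-x_0)+\zeta^2(t-t_0)|$, which grows only quadratically in $\zeta$. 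Since Assumption~\ref{a:jump-matrix} forces the off-diagonal entries of $\mathbf{V}_0-\mathbb{I}$ to decay like $O(|\zeta|^{-4-\varepsilon-\nu})$, the product decays, and a dominated-convergence argument (together with the H\"older/decay bound \eqref{Holder-infinity}) gives $\|\mathbf{V}(\cdot;x,t)-\mathbf{V}(\cdot;x_0,t_0)\|_{\mathcal{X}}\to0$; on the bounded loops $\Sigma_j^\pm$ the phases are bounded and converge uniformly, so these pieces are handled directly. This yields $\|\mathbf{w}(\cdot;x,t)-\mathbf{w}(\cdot;x_0,t_0)\|_{\mathcal{X}}\to0$.

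Because $\mathbf{C}_{\mathbf{w}}$ is bounded on $\mathcal{X}$ with $\|\mathbf{C}_{\mathbf{w}_1}-\mathbf{C}_{\mathbf{w}_2}\|_{\mathcal{X}\to\mathcal{X}}\lesssim\|\mathbf{w}_1-\mathbf{w}_2\|_{\mathcal{X}}$, the previous step gives $\mathbf{C}_{\mathbf{w}(x,t)}\to\mathbf{C}_{\mathbf{w}(x_0,t_0)}$ in operator norm. Since $\mathbb{I}-\mathbf{C}_{\mathbf{w}(x_0,t_0)}$ is boundedly invertible, a Neumann-series (second resolvent) argument shows that $\mathbb{I}-\mathbf{C}_{\mathbf{w}(x,t)}$ is invertible with uniformly bounded inverse for $(x,t)$ near $(x_0,t_0)$ and that $(\mathbb{I}-\mathbf{C}_{\mathbf{w}(x,t)})^{-1}\to(\mathbb{I}-\mathbf{C}_{\mathbf{w}(x_0,t_0)})^{-1}$. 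Consequently $\boldsymbol{\mu}(\cdot;x,t)\to\boldsymbol{\mu}(\cdot;x_0,t_0)$, and hence $(\boldsymbol{\mu}\mathbf{w})(\cdot;x,t)\to(\boldsymbol{\mu}\mathbf{w})(\cdot;x_0,t_0)$, in $\mathcal{X}$. Applying the uniform bound $\|\mathcal{C}f\|_{L^\infty(\mathbb{C}\setminus\Sigma)}\lesssim\|f\|_{\mathcal{X}}$ to $f=(\boldsymbol{\mu}\mathbf{w})(\cdot;x,t)-(\boldsymbol{\mu}\mathbf{w})(\cdot;x_0,t_0)$ then gives exactly $\sup_{\lambda\in\mathbb{C}\setminus\Sigma}\|\mathbf{M}(\lambda;x,t)-\mathbf{M}(\lambda;x_0,t_0)\|\to0$.

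The main obstacle is the uniformity in $\lambda$ all the way up to $\Sigma$ and out to $\lambda=\infty$: a plain $L^2$ version of the Beals--Coifman theory controls boundary values only in mean square and is not enough for a supremum over $\mathbb{C}\setminus\Sigma$. This is precisely why one must work in a H\"older space with the decay encoded in Assumption~\ref{a:jump-matrix}, whose $O(|\zeta|^{-4-\varepsilon-\nu})$ rate is what defeats the quadratic-in-$\zeta$ growth of the phase difference $\theta-\theta_0$ and simultaneously secures the Plemelj--Privalov mapping property. An alternative would be a small-norm argument for the ratio $\mathbf{M}(\lambda;x,t)\mathbf{M}(\lambda;x_0,t_0)^{-1}$, whose jump $\mathbf{M}_-(\zeta;x,t)\mathbf{V}(\zeta;x,t)\mathbf{V}(\zeta;x_0,t_0)^{-1}\mathbf{M}_-(\zeta;x,t)^{-1}$ is close to $\mathbb{I}$; but controlling the conjugating boundary values $\mathbf{M}_-(\cdot;x,t)$ uniformly in $(x,t)$ requires the same resolvent bound, so I would favor the direct approach above.
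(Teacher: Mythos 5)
Your overall strategy---prove that the jump data converge in a H\"older-type norm, upgrade this to operator-norm convergence of the Beals--Coifman operators, invert by a Neumann-series/second-resolvent argument, and finish with Plemelj--Privalov to control the Cauchy integral uniformly on the complement of the contour---is precisely the skeleton of the paper's proof (the resolvent manipulation is the same one the paper takes from Lax, and your phase-times-decay estimate is the same computation as the paper's Lemma~\ref{lem:convergence}). The substantive difference is that the paper never runs this argument on the unbounded contour $\Sigma$: it first maps $\Sigma$ to a \emph{compact} contour $\Gamma$ by the M\"obius transformation \eqref{Mobius}, transplants the problem to \rhref{rhp:compact}, and only then works in $C^\mu(\Gamma)$ with $0<\mu<\nu<1$.

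That difference is where your proposal has a genuine gap. You assert that ``Zhou's vanishing lemma applies \ldots so $\mathbb{I}-\mathbf{C}_{\mathbf{w}}$ is invertible on $\mathcal{X}$.'' The vanishing lemma gives invertibility on $L^2(\Sigma)$; to conclude bounded invertibility on your weighted H\"older space $\mathcal{X}$ you need a Fredholm theory for the singular integral operator on $\mathcal{X}$ (index zero, plus identification of the $\mathcal{X}$-kernel with the trivial $L^2$-kernel), and you likewise need the mapping bound $\|\mathcal{C}f\|_{L^\infty(\mathbb{C}\setminus\Sigma)}\lesssim\|f\|_{\mathcal{X}}$. Both of the tools you invoke for this---Plemelj--Privalov and H\"older-space Fredholm theory (the paper cites \cite[Sections A.4--A.5]{KamvissisMM2003})---are compact-contour results; on an unbounded contour both require a careful treatment of the point at infinity, and the oscillatory factors $\ee^{\pm 2\ii\zeta(x+\zeta t)}$, whose local H\"older seminorms grow without bound as $\zeta\to\infty$ along $\mathbb{R}$, are exactly the delicate terms. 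The paper's compactification is not cosmetic: the hypothesis \eqref{Holder-infinity}, with its weight $\zeta^{4+\varepsilon}$, is engineered so that after the M\"obius map the fully conjugated jump matrix is H\"older continuous at $z=1$, the image of $\lambda=\infty$ (this is Lemma~\ref{lem:HoelderJumpGamma}), which is what places the problem inside the scope of the citable compact-contour theory. So your estimates are the right ones, but until you either develop the weighted-H\"older Fredholm and Cauchy-operator theory on the unbounded $\Sigma$ from scratch, or compactify as the paper does (at which point your proof becomes the paper's proof), the invertibility of $\mathbb{I}-\mathbf{C}_{\mathbf{w}}$ on $\mathcal{X}$ and the uniform $L^\infty$ bound for $\mathcal{C}$ are unproved assertions rather than consequences of the vanishing lemma.
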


To prove these propositions, we first map the unbounded contour $\Sigma$ to a compact one with the help of the M\"obius transformation from the $\lambda$-plane to the $z$-plane defined by
\begin{equation}
z=z(\lambda):= \frac{\lambda - \ii r}{\lambda+\ii r},\qquad \lambda=\lambda(z):= -\ii r \frac{z+1}{z-1},
\label{Mobius}
\end{equation}
where $r>0$ is any number satisfying $r> \sup |\Sigma\setminus \mathbb{R}|$ so that the loops $\Sigma_j^{\pm}$, $j=1,2,\dots,N$, all lie within the disk $|\lambda|< r$. The fractional linear transformation $\lambda\mapsto z$ maps the upper (resp.\@ lower) half $\lambda$-plane to the interior (resp.\@ exterior) of the unit circle $\mathbb{T}$ in the $z$-plane and the image $z(\mathbb{R}) = \mathbb{T}$ of $\mathbb{R}$ is oriented counter-clockwise. Moreover, it maps the loop $\Sigma_j^+$ (resp.\@ $\Sigma_j^-$) to a loop $\Gamma_j^{+}$ (resp.\@ $\Gamma_j^{-}$) in the interior (resp.\@ exterior) of the unit circle $\mathbb{T}$ in the $z$-plane. Importantly, no point of the jump contour $\Sigma$ is mapped to $z=\infty$. The contour system $\Gamma:= z(\Sigma) = \bigcup_{j=1}^N (\Gamma_j^{+} \cup \Gamma_j^{-}) \cup \mathbb{T}$ is compact, consisting of a union of finitely many disjoint loops and hence does not have any self-intersection points. The orientations of the loops are preserved: the images $z(\Sigma^+_j) = \Gamma_j^{+}$ are oriented clockwise and $z(\Sigma^-_j) = \Gamma_j^{-}$ are oriented counter-clockwise, $j=1,2,\ldots,N$. Finally, note that the point $\lambda=\infty$ is mapped to $z=1$, and $z=\infty$ is the image of $\lambda=-\ii r$. 

We set $\mathbf{W}_0(z):= \mathbf{V}_0(\lambda(z))$, for $z \in \Gamma$. It is clear from the definition that $\det(\mathbf{W}_0(z))=1$. The H\"older continuity of $\mathbf{V}_0(\zeta)$ on $\Sigma$ and the H\"older condition \eqref{Holder-infinity} at infinity are sufficient to guarantee that $\mathbf{W}_0(z)$ satisfies the H\"older condition on $\Gamma$ with exponent $\nu$ including at the point $z=1$,
the image of $\lambda=\infty$ under the mapping \eqref{Mobius}.

We consider the following Riemann-Hilbert problem formulated in the $z$-plane (see \eqref{Mobius}) for the compact jump contour $\Gamma=z(\Sigma)$ and the ``core'' jump matrix $\mathbf{W}_0(z)$.
\begin{rhp}
\label{rhp:compact}
Find a $2\times 2$ matrix valued function $\mathbf{N}(z;x,t)$ with the following properties.
\begin{itemize}
\item {\bf Analyticity:} $\mathbf{N}(z;x,t)$ is analytic in $z$ for $z \in \mathbb{C}\setminus \Gamma$.
\item {\bf Jump conditions:} $\mathbf{N}(z;x,t)$ admits H\"older continuous boundary values on $\Gamma$ for all exponents $\mu<\nu$ as $z\to s$ for $s\in\Sigma$, and these boundary values are related by the following jump condition
\begin{equation}
\mathbf{N}_+(s;x,t)=\mathbf{N}_-(s;x,t) \ee^{-\ii \lambda(s) (x + \lambda(s) t)\sigma_3} \mathbf{W}_0(s) \ee^{\ii \lambda(s) (x + \lambda(s) t)\sigma_3},\qquad s \in \Gamma,
\label{N-jump}
\end{equation}
where the boundary values
\begin{equation*}
\mathbf{N}_{\pm}(s;x,t)  :=\lim_{\substack{{z\to s}\\{z\in z(\Omega^\pm)}}} \mathbf{N}(z;x,t)
\end{equation*}
exist independently of the path of approach.
\item {\bf Normalization:} $\mathbf{N}(z;x,t) \to \mathbb{I}$ as $z\to\infty$ in $\mathbb{C}\setminus \Gamma$.
\end{itemize}
\end{rhp}
The correspondence between solutions of the two Riemann-Hilbert problems stated above is the following. Given the solution $\mathbf{M}(z;x,t)$ of \rhref{rhp:background}, the matrix $\mathbf{N}(z;x,t):=\mathbf{M}(-\ii r; x, t)^{-1} \mathbf{M}(\lambda(z);x,t)$ is a solution of \rhref{rhp:compact} with the properties stated therein (see the proof of \cite[Lemma A.2.1]{KamvissisMM2003} for details), and it is easy to show using Liouville's theorem that this solution is unique, satisfying $\det(\mathbf{N}(z;x,t))\equiv 1$. Conversely, given a solution $\mathbf{N}(z;x,t)$ of \rhref{rhp:compact}, $\mathbf{M}(\lambda;x,t):=\mathbf{N}(1;x,t)^{-1} \mathbf{N}(z(\lambda);x,t)$ defines a solution of \rhref{rhp:background} with all of the properties listed therein. 
To prove Proposition~\ref{prop:independence} and Proposition~\ref{prop:continuity},
we make use of the bijective correspondence between the solutions of these Riemann-Hilbert problems and rely on Fredholm theory for Riemann-Hilbert problems in H\"older spaces presented in \cite[Section A.4 and Section A.5]{KamvissisMM2003}.

We define the Cauchy integral of a matrix-valued function $\dummy$ defined on $\Gamma$ by
\begin{equation}
\mathcal{C}^\Gamma[\dummy](z) := \frac{1}{2\pi \ii}\int_\Gamma \frac{\dummy(s)}{s-z}\dd s,\quad z\in\mathbb{C}\setminus\Gamma.
\end{equation}
We encode the parametric dependence on $(x,t)$ of the jump matrix in \eqref{N-jump} by introducing
\begin{equation}
\overline{\mathbf{W}}_0(s;x,t):= \ee^{-\ii \lambda(s) ( x + \lambda(s) t) \sigma_3} \mathbf{W}_0(s)  \ee^{\ii \lambda(s)( x + \lambda(s) t) \sigma_3}, \qquad s\in \Gamma,
\end{equation}
and we let $\mathcal{T}(x,t)$ denote the Beals-Coifman integral operator acting on matrix-valued functions defined on $\Gamma$ and depending parametrically on $(x,t)\in\mathbb{R}^2$:
\begin{equation}
\mathcal{T}(x,t)[\dummy](s):= \mathcal{C}_-^\Gamma[\dummy(\diamond)(\overline{\mathbf{W}}_0(\diamond;x,t)-\mathbb{I})](s)=\frac{1}{2\pi\ii}\int_\Gamma\frac{\dummy(w)(\overline{\mathbf{W}}_0(w;x,t) - \mathbb{I})}{w-s_-} \dd w, \qquad s\in\Gamma.
\end{equation}
Here the ``$-$'' subscript indicates taking a boundary value at  $s\in\Gamma$ from $z(\Omega^-)$.

We denote by $C^\nu(\Gamma)$ the Banach algebra of $2\times 2$ complex-valued matrix functions $\dummy$ that are H\"older continuous on $\Gamma$ with exponent $\nu\in(0,1)$, and we let $\| \dummy \|_{C^\nu(\Gamma)}$ denote the norm of $\dummy\in C^\nu(\Gamma)$ defined by
\begin{equation}
\| \dummy \|_{C^\nu(\Gamma)} := \| \dummy \|_{L^\infty(\Gamma)} + \sup_{\substack{s_1,s_2\in\Gamma \\ s_2 \neq s_1}} \frac{\| \dummy(s_2) - \dummy(s_1)\|}{|s_2 - s_1|^\nu}.
\end{equation}
We denote by $\| \diamond \|_{C^\nu(\Gamma) \circlearrowleft}$ the induced operator norm on the algebra of bounded operators on $C^\nu(\Gamma)$.
\begin{lemma}
For each $(x,t)\in\mathbb{R}^2$, $\overline{\mathbf{W}}_0(\diamond;x,t)-\mathbb{I}\in C^\nu(\Gamma)$.
\label{lem:HoelderJumpGamma}
\end{lemma}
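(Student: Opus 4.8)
The plan is to transfer the problem to the compact contour $\Gamma$ and to treat its two qualitatively different pieces separately: the bounded loops $\Gamma_j^\pm$, which are uniformly bounded away from the exceptional point $z=1$, and the unit circle $\mathbb{T}=z(\mathbb{R})$, which passes through $z=1$, the image of $\lambda=\infty$. Writing $\overline{\mathbf{W}}_0(s;x,t)-\mathbb{I}=\mathbf{E}(s)(\mathbf{W}_0(s)-\mathbb{I})\mathbf{E}(s)^{-1}$ with $\mathbf{E}(s):=\ee^{-\ii\lambda(s)(x+\lambda(s)t)\sigma_3}$, I would first record that $s\mapsto\mathbf{W}_0(s)-\mathbb{I}$ is H\"older-$\nu$ on all of $\Gamma$ (including at $z=1$, where it vanishes) and that, by the decay hypothesis in Assumption~\ref{a:jump-matrix} together with the estimate $|1-s|\asymp|\lambda(s)|^{-1}$ as $s\to1$ coming from \eqref{Mobius}, one has the stronger bound $\|\mathbf{W}_0(s)-\mathbb{I}\|\le C|1-s|^{4+\varepsilon+\nu}$ near $s=1$. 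Since $\Gamma$ is a finite disjoint union of loops of positive mutual separation, and $\overline{\mathbf{W}}_0-\mathbb{I}$ will be seen to be bounded, the contribution to the H\"older quotient from pairs $s_1,s_2$ lying on distinct components is automatically controlled by the sup norm; it therefore suffices to estimate the quotient on each component.

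On each loop $\Gamma_j^\pm$ the parameter $\lambda(s)$ is bounded and analytic, so $\mathbf{E}(s)^{\pm1}$ is analytic and bounded there; being a product of bounded H\"older-$\nu$ factors, $\overline{\mathbf{W}}_0-\mathbb{I}$ is then H\"older-$\nu$ on each loop. The same argument applies on any arc of $\mathbb{T}$ bounded away from $z=1$, where additionally $\lambda(s)\in\mathbb{R}$ forces $\mathbf{E}(s)$ to be unitary. The entire difficulty is thus concentrated near $s=1$ on $\mathbb{T}$, where the phase $\phi(s):=\lambda(s)(x+\lambda(s)t)=\lambda(s)x+\lambda(s)^2t$ is real but blows up, being dominated by $t\lambda(s)^2\asymp t|1-s|^{-2}$, with $|\phi'(s)|\le C(|1-s|^{-2}|x|+|1-s|^{-3}|t|)$; this unbounded oscillation of $\mathbf{E}$ is the main obstacle. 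I would exploit the diagonal structure: conjugation by $\mathbf{E}=\diag(\ee^{-\ii\phi},\ee^{\ii\phi})$ leaves the diagonal entries of $\mathbf{W}_0-\mathbb{I}$ unchanged (hence H\"older-$\nu$) and multiplies the off-diagonal entries by the unimodular factors $\ee^{\mp2\ii\phi}$. It remains only to show that each product $h(s)g(s)$, with $h:=\ee^{\mp2\ii\phi}$ satisfying $|h|\equiv1$ on $\mathbb{T}$ and $g$ an off-diagonal entry of $\mathbf{W}_0-\mathbb{I}$, is H\"older-$\nu$ near $s=1$.

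For this I would split $|h(s_2)g(s_2)-h(s_1)g(s_1)|\le|g(s_2)-g(s_1)|+|h(s_2)-h(s_1)|\,|g(s_\star)|$, where the first term is $\le C|s_2-s_1|^\nu$ by $|h|\equiv1$ and the H\"older-$\nu$ continuity of $g$, and where $s_\star$ is chosen (by taking the appropriate base point) to be whichever of $s_1,s_2$ is closer to $1$, so that $|g(s_\star)|\le C\rho^{4+\varepsilon+\nu}$ with $\rho:=\min_i|s_i-1|$. Setting $\delta:=|s_2-s_1|$ and using $|h(s_2)-h(s_1)|\le\min(2,\,2|\phi(s_2)-\phi(s_1)|)$, I would finish by a short case analysis comparing $\delta$ with $\rho$: when $\delta\le\rho/2$ the connecting arc stays at distance $\gtrsim\rho$ from $1$, so $|\phi(s_2)-\phi(s_1)|\lesssim\delta\rho^{-3}$ and the two sub-cases $\delta\rho^{-3}\le1$ and $\delta\rho^{-3}>1$ each reduce, after absorbing nonnegative powers of the bounded quantity $\rho$, to the desired bound $C\delta^\nu$; when $\delta>\rho/2$ the trivial bound $|h(s_2)-h(s_1)|\le2$ together with $\rho<2\delta$ and $\delta\le2$ gives $\rho^{4+\varepsilon+\nu}\le C\delta^\nu$ directly. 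In every case the decay exponent $4+\varepsilon+\nu$ comfortably dominates the cubic blow-up of $\phi'$, which is precisely where the decay hypothesis in Assumption~\ref{a:jump-matrix} is used. Collecting the diagonal and off-diagonal estimates shows $\overline{\mathbf{W}}_0(\diamond;x,t)-\mathbb{I}\in C^\nu(\Gamma)$ for each fixed $(x,t)$, the only genuinely delicate point being the interplay, near $s=1$, between the unbounded phase $\phi\sim t\lambda^2$ and the rapid decay of the jump matrix.
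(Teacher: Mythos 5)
Your proof is correct, and its overall skeleton matches the paper's: split $\Gamma$ into the part bounded away from $z=1$ (where everything is a product of bounded H\"older-$\nu$ factors in the Banach algebra, so nothing needs care) and the part near $z=1$, and there exploit the diagonal structure of the conjugation so that the diagonal entries are untouched and only the off-diagonal entries, carrying the unimodular factors $\ee^{\mp 2\ii\phi}$ with $\phi(s)=\lambda(s)x+\lambda(s)^2t$, are at issue. Where you genuinely diverge is in how the critical estimate near $s=1$ is executed. The paper regroups the factors: it writes $\overline{\mathbf{W}}_0(s;x,t)-\mathbb{I}$ in terms of $\mathbf{B}(\diamond):=\lambda(\diamond)^{4+\varepsilon}(\mathbf{V}_0(\lambda(\diamond))-\mathbb{I})$, which lies in $C^\nu(\Gamma\cap D)$ directly by \eqref{Holder-infinity}, and then shows that the single scalar $f_\pm(s)=\lambda(s)^{-4-\varepsilon}\ee^{\pm 2\ii\phi(s)}$ has derivative $O(|s-1|^{1+\varepsilon})$ (see \eqref{eq:fprime}), hence is Lipschitz, hence H\"older-$\nu$; the off-diagonal entries are then products of H\"older functions and no case analysis is needed. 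You instead keep the decay attached to $\mathbf{W}_0-\mathbb{I}$ itself, using its global H\"older-$\nu$ continuity together with the pointwise bound $\|\mathbf{W}_0(s)-\mathbb{I}\|=O(|1-s|^{4+\varepsilon+\nu})$ (which is exactly the consequence of Assumption~\ref{a:jump-matrix} recorded in the paper just after \eqref{Holder-infinity}), and you control the product $\ee^{\mp2\ii\phi}g$ by hand: integrate $\phi'$ only along arcs staying away from $s=1$ and split according to $|s_2-s_1|$ versus $\min_i|s_i-1|$. Your route is more elementary in that you never differentiate the product, only the phase, and it makes the quantitative competition explicit: any pointwise decay beating the cubic blow-up $|\phi'(s)|\sim|s-1|^{-3}$ would do, so your argument is somewhat more robust to weakening the decay hypothesis. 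The paper's regrouping is shorter and cleaner, delivering Lipschitz continuity of the combined scalar in one derivative computation. Both arguments ultimately rest on the same two inputs: H\"older continuity of the jump data at $z=1$ coming from \eqref{Holder-infinity}, and the fact that the decay of $\mathbf{V}_0(\zeta)-\mathbb{I}$ as $\zeta\to\infty$ dominates the growth of the phase derivative there.
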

\begin{proof}
Let $D$ denote the disk centered at $z=1$ of small radius $\delta>0$.  It suffices to prove that $\overline{\mathbf{W}}_0(\diamond;x,t)-\mathbb{I}\in C^\nu(\Gamma\cap D)\cap C^\nu(\Gamma\setminus D)$.

To prove that $\overline{\mathbf{W}}_0(\diamond;x,t)-\mathbb{I}\in C^\nu(\Gamma\setminus D)$,  we note that the constant function $\mathbb{I}\in C^\nu(\Gamma\setminus D)$, and the conjugating factors $\ee^{\pm\ii\lambda(s)(x+\lambda(s)t)\sigma_3}$ are analytic away from $s=1$ and hence also elements of the algebra $C^\nu(\Gamma\setminus D)$.  By conformality of $z\mapsto \lambda(z)$ away from $z=1$ and the condition that $\|\mathbf{V}_0(\zeta_2)-\mathbf{V}_0(\zeta_1)\|\le K|\zeta_2-\zeta_1|^\nu$ for bounded $\zeta_1,\zeta_2\in\Sigma$ it then follows that $\mathbf{W}_0(\diamond)\in C^\nu(\Gamma\setminus D)$.
Restoring the conjugating factors, $\overline{\mathbf{W}}_0(\diamond;x,t)\in C^\nu(\Gamma\setminus D)$, and then subtracting the constant function $\mathbb{I}$ we have $\overline{\mathbf{W}}_0(\diamond;x,t)-\mathbb{I}\in C^\nu(\Gamma\setminus D)$.

Restricting now to $\Gamma\cap D$, we write $\overline{\mathbf{W}}_0(s;x,t)-\mathbb{I}=\ee^{-\ii(\lambda(s)x+\lambda(s)^2t)\sigma_3}\lambda(s)^{-4-\varepsilon}
\cdot \lambda(s)^{4+\varepsilon}(\mathbf{V}_0(\lambda(s))-\mathbb{I})\ee^{\ii(\lambda(s)x+\lambda(s)^2t)\sigma_3}$, where the branch cuts of the functions $\lambda(s)^{\pm(4+\varepsilon)}$ emanate from the simple pole $s=1$ to the left (hence only touching $\Gamma\cap D$ at $s=1$).  According to \eqref{Holder-infinity}, it follows that $\mathbf{B}(\diamond):=\lambda(\diamond)^{4+\varepsilon}(\mathbf{V}_0(\lambda(\diamond))-\mathbb{I})\in C^\nu(\Gamma\cap D)$ for $\delta>0$ small enough.  The diagonal part of $\overline{\mathbf{W}}_0(\diamond;x,t)-\mathbb{I}$ is equal to the product of the scalar function $\lambda(\diamond)^{-4-\varepsilon}$ (obviously H\"older-continuous with exponent $\nu>0$ because $\epsilon>0$ and $s=1$ is a simple pole of $s\mapsto\lambda(s)$) and the diagonal part of $\mathbf{B}(\diamond)$; hence H\"older continuity of the diagonal part is confirmed on $\Gamma\cap D$.  For the off-diagonal part, it then suffices to show that the scalar functions $f_\pm(s):=\lambda(s)^{-4-\varepsilon}\ee^{\pm 2\ii(\lambda(s)x+\lambda(s)^2t)}$ are H\"older continuous with exponent $\nu$ on $\Gamma\cap D$.  We compute 
\begin{equation}
f'_\pm(s)=\left[-(4+\varepsilon)\lambda(s)^{-1}\pm 2\ii(x+2\lambda(s)t)\right]\lambda(s)^{-4-\varepsilon}\lambda'(s)\ee^{\pm 2\ii(\lambda(s)x+\lambda(s)^2t)}.
\label{eq:fprime}
\end{equation}
Since $\lambda(s)\in\mathbb{R}$ for $s\in\Gamma\cap D$ when the radius $\delta>0$ is sufficiently small, and since $s\mapsto\lambda(s)$ has a simple pole at $s=1$, we easily find that $|f'_\pm(s)|=O(|s-1|^{1+\varepsilon})$ on $\Gamma\cap D$, which is bounded near $s=1$.  It follows that the functions $f_\pm(\diamond)$ are both Lipschitz and hence H\"older continuous on $\Gamma\cap D$ with exponent $\nu$.   This shows that $\overline{\mathbf{W}}_0(\diamond;x,t)-\mathbb{I}\in C^\nu(\Gamma\cap D)$.  

Since we have now shown that $\overline{\mathbf{W}}_0(\diamond;x,t)-\mathbb{I}\in C^\nu(\Gamma\cap D)\cap C^\nu(\Gamma\setminus D)$ and since the matrix function in question is continuous at the junction points  $s\in\Gamma$ with $|s-1|=\delta$, the proof is complete.
\end{proof}

This result implies that the Beals-Coifman operator $\mathcal{T}(x,t)$ is a bounded linear operator on $C^\mu(\Gamma)$ whenever $0<\mu\leq \nu<1$. We denote by $\mathcal{I}$ the identity operator on $C^\mu(\Gamma)$. For arbitrary given $(x,t)\in\mathbb{R}^2$, every solution $\mathbf{X}\in C^\mu(\Gamma)$ of the Beals-Coifman singular integral equation
\begin{equation}
(\mathcal{I} - \mathcal{T}(x,t))\mathbf{X} = \mathbb{I} \in C^\mu(\Gamma)
\end{equation}
produces a solution of \rhref{rhp:compact} via the formula
\begin{equation}
\begin{split}
\mathbf{N}(z;x,t) &= \mathbb{I}+\mathcal{C}^\Gamma[\mathbf{X}(\diamond;x,t)(\overline{\mathbf{W}}_0(\diamond;x,t)-\mathbb{I})](z)\\
&=\mathbb{I} + \frac{1}{2 \pi \ii} \int_{\Gamma}\frac{\mathbf{X}(s;x,t)(\overline{\mathbf{W}}_0(s;x,t) - \mathbb{I})}{s-z}\dd s,\quad z\in\mathbb{C}\setminus\Gamma.
\end{split}
\label{N-Cauchy}
\end{equation}
We know from Fredholm theory (see \cite[Section A.5]{KamvissisMM2003}) combined with the equivalence of \rhref{rhp:background} and \rhref{rhp:compact} and the vanishing lemma that if the strict inequalities $0<\mu<\nu<1$ hold, $\mathcal{I}-\mathcal{T}(x,t)$ is invertible with a bounded inverse on $C^\mu(\Gamma)$ for any given $(x,t)\in\mathbb{R}^2$. We set $\mathbf{X}(s;x,t):= (\mathcal{I} - \mathcal{T}(x,t))^{-1}[\mathbb{I}](s)$.

A result related to Lemma~\ref{lem:HoelderJumpGamma} that we will need later is the following.
\begin{lemma}
As $(x,t)\to(x_0,t_0)\in\mathbb{R}^2$, $\overline{\mathbf{W}}_0(\diamond;x,t)\to\overline{\mathbf{W}}_0(\diamond;x_0,t_0)$ in $C^\nu(\Gamma)$.
\label{lem:convergence}
\end{lemma}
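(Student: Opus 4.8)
The plan is to mirror the two-region decomposition used in the proof of Lemma~\ref{lem:HoelderJumpGamma}. First I would observe that because the conjugating factors $\ee^{\mp\ii\lambda(s)(x+\lambda(s)t)\sigma_3}$ are diagonal, the diagonal entries of $\overline{\mathbf{W}}_0(s;x,t)$ do not depend on $(x,t)$ at all; only the off-diagonal entries carry the parametric dependence, through scalar multipliers $\ee^{\mp2\ii\phi(s;x,t)}$ with $\phi(s;x,t):=\lambda(s)x+\lambda(s)^2t$. Hence the matrix difference $\mathbf{G}(\diamond):=\overline{\mathbf{W}}_0(\diamond;x,t)-\overline{\mathbf{W}}_0(\diamond;x_0,t_0)$ is purely off-diagonal, and since $C^\nu(\Gamma)$ is a Banach algebra it suffices to control, on each of $\Gamma\setminus D$ and $\Gamma\cap D$ (with $D$ the small disk about $z=1$ from Lemma~\ref{lem:HoelderJumpGamma}), the convergence to zero of the relevant scalar multipliers against the $C^\nu$ factors $\mathbf{W}_0$ (on $\Gamma\setminus D$) and $\mathbf{B}(\diamond)=\lambda(\diamond)^{4+\varepsilon}(\mathbf{V}_0(\lambda(\diamond))-\mathbb{I})$ (on $\Gamma\cap D$).

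On $\Gamma\setminus D$ the map $s\mapsto\lambda(s)$ is bounded and analytic, so $g_\pm(\diamond;x,t):=\ee^{\pm2\ii\phi(\diamond;x,t)}$ and its $s$-derivative depend continuously on $(x,t)$ uniformly on the compact set $\Gamma\setminus D$; in fact $\|g_\pm(\diamond;x,t)-g_\pm(\diamond;x_0,t_0)\|_{L^\infty}\le C(|x-x_0|+|t-t_0|)$ and the analogous bound holds for the derivative, so $g_\pm(\diamond;x,t)\to g_\pm(\diamond;x_0,t_0)$ in $C^1(\Gamma\setminus D)$ and hence in $C^\nu(\Gamma\setminus D)$ (a Lipschitz bound controls the $\nu$-H\"older seminorm on a bounded arc). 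Multiplying by $\mathbf{W}_0\in C^\nu(\Gamma\setminus D)$ then gives $\|\mathbf{G}\|_{C^\nu(\Gamma\setminus D)}\to0$.

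On $\Gamma\cap D$ I would factor the off-diagonal entries of $\mathbf{G}$ as $\mathbf{B}(s)$ (which lies in $C^\nu(\Gamma\cap D)$ by Lemma~\ref{lem:HoelderJumpGamma}) times the scalar multipliers $h_\pm(s;x,t):=\lambda(s)^{-4-\varepsilon}\ee^{\pm2\ii\phi(s;x,t)}$, reducing matters to showing $h_\pm(\diamond;x,t)\to h_\pm(\diamond;x_0,t_0)$ in $C^\nu(\Gamma\cap D)$, again via the $C^1$ norm. The $L^\infty$ estimate is immediate: since $\lambda(s)$ is real here, $|h_\pm(s;x,t)-h_\pm(s;x_0,t_0)|\le 2\lambda(s)^{-4-\varepsilon}|\phi(s;x,t)-\phi(s;x_0,t_0)|\le C(|x-x_0|+|t-t_0|)$, the negative powers of $\lambda(s)$ taming the growth of $\phi$ near $s=1$. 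For the derivative I would write $\partial_s h_\pm(s;x,t)=A_\pm(s;x,t)\ee^{\pm2\ii\phi(s;x,t)}$ with prefactor $A_\pm$ read off from \eqref{eq:fprime}, and split the difference $\partial_s h_\pm(\diamond;x,t)-\partial_s h_\pm(\diamond;x_0,t_0)$ into a term proportional to $A_\pm(\diamond;x,t)-A_\pm(\diamond;x_0,t_0)$ (which is $O(|x-x_0|+|t-t_0|)$ uniformly, since the $(x,t)$-dependence of $A_\pm$ enters only through the bounded quantities $\lambda(s)^{-4-\varepsilon}\lambda'(s)$ and $\lambda(s)^{-3-\varepsilon}\lambda'(s)$) and a term $A_\pm(\diamond;x_0,t_0)[\ee^{\pm2\ii\phi(\diamond;x,t)}-\ee^{\pm2\ii\phi(\diamond;x_0,t_0)}]$.

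This last term is the crux, because the exponential difference is only bounded by $2|\phi(s;x,t)-\phi(s;x_0,t_0)|$, which involves $\lambda(s)^2$ and therefore blows up as $s\to1$, so a naive uniform estimate fails. The resolution is to exploit the decay $|A_\pm(s;x_0,t_0)|=O(|s-1|^{1+\varepsilon})$, which is exactly the bound on $|f'_\pm(s)|$ established in the proof of Lemma~\ref{lem:HoelderJumpGamma} (indeed $A_\pm(s;x_0,t_0)\ee^{\pm2\ii\phi(s;x_0,t_0)}=f'_\pm(s)$ and the exponential has unit modulus). Given $\eta>0$, on a sub-disk $|s-1|\le\rho$ the prefactor is $\le C\rho^{1+\varepsilon}$ while the exponential difference is $\le2$, so this contribution is $<\eta/2$ for $\rho$ small; on the complementary piece $\rho\le|s-1|\le\delta$ the quantity $\lambda(s)$ is bounded, so the exponential difference is $\le C_\rho(|x-x_0|+|t-t_0|)$ and is $<\eta/2$ once $(x,t)$ is close enough to $(x_0,t_0)$. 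Hence the derivative difference tends to $0$ in $L^\infty$, giving $C^\nu(\Gamma\cap D)$ convergence after multiplying by $\mathbf{B}\in C^\nu$. Finally, since $\mathbf{G}$ is continuous across the junction points $|s-1|=\delta$, where $\Gamma$ is a smooth arc, the $C^\nu(\Gamma)$ norm of $\mathbf{G}$ is controlled by the sum of its $C^\nu(\Gamma\cap D)$ and $C^\nu(\Gamma\setminus D)$ norms (the chord between points on opposite sides of a junction is comparable to the sum of the chords to the junction), and I conclude $\|\mathbf{G}\|_{C^\nu(\Gamma)}\to0$. The main obstacle is precisely the near-$s=1$ control of this second derivative term; everything else reduces to routine Lipschitz-in-$(x,t)$ estimates.
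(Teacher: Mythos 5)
Your proof is correct, and up to the crux it follows the paper's skeleton: the same decomposition $\Gamma=(\Gamma\setminus D)\sqcup(\Gamma\cap D)$, the same observation that the diagonal entries carry no $(x,t)$-dependence, and the same Banach-algebra reduction to the scalar multipliers (your $h_\pm$ are the paper's $f_\pm$) on $\Gamma\cap D$. Where you genuinely diverge is in how the singularity at $s=1$ is neutralized. The paper writes $f_\pm'(s;x,t)-f_\pm'(s;x_0,t_0)=\int_L\nabla f_\pm'(s;\bar{x},\bar{t})\cdot\dd\ell(\bar{x},\bar{t})$ over a path $L$ in parameter space, bounds the gradient by $O(|s-1|^{\varepsilon-1})$ uniformly in the parameters --- a bound that blows up at $s=1$ but is integrable --- and then integrates in $s$, so the H\"older seminorm of $\Delta f_\pm$ comes out as $O(|L|\,|s_2-s_1|^{\varepsilon})$, with $\varepsilon\ge\nu$ invoked at the end to dominate $|s_2-s_1|^\varepsilon$ by $|s_2-s_1|^\nu$. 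You instead prove that $\partial_s h_\pm(\diamond;x,t)-\partial_s h_\pm(\diamond;x_0,t_0)\to 0$ in $L^\infty(\Gamma\cap D)$, via the two-scale argument ($|s-1|\le\rho$ versus $\rho\le|s-1|\le\delta$) that trades the vanishing $|A_\pm(s;x_0,t_0)|=O(|s-1|^{1+\varepsilon})$ against the $O(|s-1|^{-2})$ growth of the phase difference, and then upgrade Lipschitz control to $\nu$-H\"older control on a bounded arc. Both proofs exploit the same cancellation coming from \eqref{eq:fprime} (net exponent $(1+\varepsilon)-2=\varepsilon-1$), but the bookkeeping differs: the paper's pointwise bound is linear in the parameter distance $|L|$ but not uniform in $s$, with integration in $s$ absorbing the singularity; your bound is uniform in $s$ but non-quantitative (optimizing $\rho$ would give only a rate like $d^{(1+\varepsilon)/(3+\varepsilon)}$ in the parameter distance $d$). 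So the paper's route buys a Lipschitz-type rate for the convergence in $C^\nu(\Gamma)$, at the cost of using $\varepsilon\ge\nu$; your route yields convergence with no rate --- which is all the lemma asserts --- and at this step needs only the positivity of $1+\varepsilon$ together with Lemma~\ref{lem:HoelderJumpGamma}. Your treatment of $\Gamma\setminus D$ and the gluing across the junction points $|s-1|=\delta$ match the paper's.
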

\begin{proof}
 As in the proof of Lemma~\ref{lem:HoelderJumpGamma}, we again write $\Gamma$ as the disjoint union $\Gamma=(\Gamma\setminus D)\sqcup(\Gamma\cap D)$.   The $C^\nu(\Gamma)$ norm of $\overline{\mathbf{W}}_0(\diamond;x,t)-\overline{\mathbf{W}}_0(\diamond;x_0,t_0)$ is controlled by the sum of its $C^\nu(\Gamma\setminus D)$ and $C^\nu(\Gamma\cap D)$ norms.  Since derivatives of $\overline{\mathbf{W}}_0(\diamond;x,t)$ with respect to $(x,t)$ are bounded on $\Gamma\setminus D$ where $s\mapsto\lambda(s)$ is analytic (uniformly for $(x,t)$ in a neighborhood of $(x_0,t_0)$), it is obvious that $\overline{\mathbf{W}}_0(\diamond;x,t)\to\overline{\mathbf{W}}_0(\diamond;x_0,t_0)$ in $C^\nu(\Gamma\setminus D)$ as $(x,t)\to (x_0,t_0)$ in $\mathbb{R}^2$.  

Now we consider $\Gamma\cap D$.
The diagonal elements of $\overline{\mathbf{W}}_0(\diamond;x,t)$ are independent of $(x,t)\in\mathbb{R}^2$, so it suffices to consider the off-diagonal elements which, as shown in the proof of Lemma~\ref{lem:HoelderJumpGamma}, are proportional on $\Gamma\cap D$ to the functions $f_\pm(\diamond)=f_\pm(\diamond;x,t)$ via factors that are independent of $(x,t)\in\mathbb{R}^2$ and H\"older continuous on $\Gamma\cap D$ with exponent $\nu$.  We set $\Delta f_\pm(\diamond):=f_\pm(\diamond;x,t)-f_\pm(\diamond;x_0,t_0)$ and then for $s_1,s_2\in\Gamma$ we have
\begin{equation}
\Delta f_\pm(s_2)-\Delta f_\pm(s_1)=\int_{s_1}^{s_2}\Delta f'(s)\,\dd s = \int_{s_1}^{s_2}(f_\pm'(s;x,t)-f_\pm'(s;x_0,t_0))\,\dd s.
\end{equation}
Now, taking $L$ to be a path in $\mathbb{R}^2$ from $(x_0,t_0)$ to $(x,t)$, we may write
\begin{equation}
f_\pm'(s;x,t)-f_\pm'(s;x_0,t_0)=\int_L\nabla f_\pm'(s;\bar{x},\bar{t})\cdot\dd\ell(\bar{x},\bar{t})
\end{equation}
where $\nabla$ is the gradient with respect to the parameters $(\bar{x},\bar{t})\in L$, and $\dd\ell(\bar{x},\bar{t})$ denotes the oriented differential line element.  From \eqref{eq:fprime} and the fact that $s\mapsto\lambda(s)$ has a simple pole at $s=1$, it is easy to estimate that the Euclidean length of $\nabla f_\pm'(s;\bar{x},\bar{t})$ is uniformly $O(|s-1|^{\varepsilon-1})$ for $s\in\Gamma\cap D$ and $(x,t)$ near $(x_0,t_0)$.  Therefore, $f_\pm'(s;x,t)-f_\pm'(s;x_0,t_0)=O(|L||s-1|^{\varepsilon-1})$ where $|L|$ denotes the arc length of $L$.  Since this estimate is integrable at $s=1\in \Gamma\cap D$,  we see that $\Delta f_\pm(s_2)-\Delta f_\pm(s_1) = O(|L||s_2-s_1|^\epsilon)$.  Using $\epsilon\ge \nu$, it then follows that the $C^\nu(\Gamma\cap D)$ norm of $\Delta f_\pm(\diamond)$ is proportional to $|L|$ and hence vanishes as $(x,t)\to (x_0,t_0)$ in $\mathbb{R}^2$.  Therefore $\overline{\mathbf{W}}_0(\diamond;x,t)\to\overline{\mathbf{W}}_0(\diamond;x_0,t_0)$ in $C^\nu(\Gamma\cap D)$ in the same limit, and the proof is finished.
\end{proof}

\begin{proof}[Proof of Proposition~\ref{prop:independence}]
Since $\mathbf{X}\in C^\mu(\Gamma)$ and $0<\mu<\nu<1$, Lemma~\ref{lem:HoelderJumpGamma} implies that for $(x,t)=(x_0,t_0)\in\mathbb{R}^2$, the density in the Cauchy integral \eqref{N-Cauchy} lies in $C^\mu(\Gamma)$.  By the Plemelj-Privalov theorem it follows that $\mathbf{N}(\diamond;x_0,t_0)$ is H\"older continuous with exponent $\mu$ uniformly on each connected component of $\mathbb{C}\setminus\Gamma$.  In particular, since $\overline{\mathbf{W}}_0(1;x_0,t_0)=\mathbb{I}$,  $z\mapsto \mathbf{N}(z;x_0,t_0)$ is continuous at $z=1$ even though generally it has a jump across the unit circle $\mathbb{T}\subset\Gamma$.  Thus $\mathbf{N}(1;x_0,t_0)$ is well-defined and has unit determinant, and from $\mathbf{M}(\lambda;x_0,t_0):=\mathbf{N}(1;x_0,t_0)^{-1}\mathbf{N}(z(\lambda);x_0,t_0)$ and $\lambda(z)\to\infty$ as $z\to 1$ the result follows.
%Let $D_1(\rho)\subset \mathbb{C}$ denote an open disk centered at $z=1$ with radius $\rho<1$ chosen sufficiently small so that $D_1 \cap \Gamma \subset \mathbb{T}$.
\end{proof}

\begin{proof}[Proof of Proposition~\ref{prop:continuity}] 
For brevity, we denote by $\mathcal{A}:=\mathcal{I}-\mathcal{T}(x_0,t_0)$ the Beals-Coifman operator anchored at the point $(x_0,t_0)$, and we denote by $\mathcal{B}:=\mathcal{I}-\mathcal{T}(x,t)$ the Beals-Coifman operator that varies with $(x,t)$. Thus, with $0<\mu<\nu<1$, the bounded linear operators $\mathcal{A}$ and $\mathcal{B}$ are invertible on $C^\mu(\Gamma)$ with bounded inverses. To control the operator norm $\| \mathcal{B}^{-1} \|_{C^\mu(\Gamma)\circlearrowleft}$, we follow \cite[Chapter 17]{Lax2002} and write
\begin{equation}
\mathcal{B} = \mathcal{A}\left[ \mathcal{I} - \mathcal{A}^{-1}(\mathcal{A}-\mathcal{B})\right].
\end{equation}
We make the proximity assumption
\begin{equation}
\| \mathcal{B} - \mathcal{A}\|_{C^\mu(\Gamma)\circlearrowleft} < \frac{1}{\| \mathcal{A}^{-1}\|_{C^\mu(\Gamma)\circlearrowleft}}
\label{B-A}
\end{equation}
on $\mathcal{B}$,
which implies that $\| \mathcal{A}^{-1} (\mathcal{A}-\mathcal{B}) \|_{C^\mu(\Gamma)\circlearrowleft}<1$. Thus, we may express $\mathcal{B}^{-1}$ via a Neumann series
\begin{equation}
\mathcal{B}^{-1} =\left[ \mathcal{I} - \mathcal{A}^{-1}(\mathcal{A}-\mathcal{B})\right]^{-1} \mathcal{A}^{-1} = \left[\sum_{k=0}^\infty \left(\mathcal{A}^{-1} (\mathcal{A}-\mathcal{B}) \right)^k \right] \mathcal{A}^{-1},
\end{equation}
convergent in operator norm on $C^\mu(\Gamma)$, and obtain the bound
\begin{equation}
\| \mathcal{B}^{-1} \|_{C^\mu(\Gamma)\circlearrowleft} \leq \frac{\|  \mathcal{A}^{-1} \|_{C^\mu(\Gamma)\circlearrowleft}}{1- \| \mathcal{A}^{-1}(\mathcal{A}-\mathcal{B})\|_{C^\mu(\Gamma)\circlearrowleft}}  \leq \frac{ \|  \mathcal{A}^{-1} \|_{C^\mu(\Gamma)\circlearrowleft}}{1- \| \mathcal{A}^{-1}\|_{C^\mu(\Gamma)\circlearrowleft} \| \mathcal{B}-\mathcal{A}\|_{C^\mu(\Gamma)\circlearrowleft}}
\label{B-inv-bound}
\end{equation}
on the operator norm of $\mathcal{B}^{-1}$.
Then we have
\begin{equation}
\| \mathcal{B}^{-1} -\mathcal{A}^{-1} \|_{C^\mu(\Gamma)\circlearrowleft} =\| \mathcal{B}^{-1}(\mathcal{A} - \mathcal{B})\mathcal{A}^{-1} \|_{C^\mu(\Gamma)\circlearrowleft} \leq \| \mathcal{B}^{-1} \|_{C^\mu(\Gamma)\circlearrowleft} \| \mathcal{B} - \mathcal{A} \|_{C^\mu(\Gamma)\circlearrowleft} \| \mathcal{A}^{-1} \|_{C^\mu(\Gamma)\circlearrowleft}
\end{equation}
and from \eqref{B-inv-bound} we obtain
\begin{equation}
\| \mathcal{B}^{-1} -\mathcal{A}^{-1} \|_{C^\mu(\Gamma)\circlearrowleft} \leq  \frac{ \|  \mathcal{A}^{-1} \|^2_{C^\mu(\Gamma)}  \| \mathcal{B}-\mathcal{A}\|_{C^\mu(\Gamma)\circlearrowleft}}{1- \| \mathcal{A}^{-1}\|_{C^\mu(\Gamma)\circlearrowleft} \| \mathcal{B}-\mathcal{A}\|_{C^\mu(\Gamma)\circlearrowleft}}.
% \|  \mathcal{A}^{-1} \|_{C^\mu(\Gamma)\circlearrowleft}.
\label{Bi-Ai}
\end{equation}
It is clear from \eqref{Bi-Ai} that $\| \mathcal{B}^{-1} -\mathcal{A}^{-1} \|_{C^\mu(\Gamma)\circlearrowleft}$ can be made arbitrarily small by making $\| \mathcal{B}-\mathcal{A}\|_{C^\mu(\Gamma)\circlearrowleft}$ small.   Now explicitly,
\begin{equation}
\begin{split}
(\mathcal{B}-\mathcal{A})[\dummy](s)&=\mathcal{C}_-^\Gamma[\dummy(\diamond)(\overline{\mathbf{W}}_0(\diamond;x,t)-\overline{\mathbf{W}}_0(\diamond;x_0,t_0))](s)\\
&=\frac{1}{2\pi\ii}\int_\Gamma\frac{\dummy(w)(\overline{\mathbf{W}}_0(w;x,t)-\overline{\mathbf{W}}_0(w;x_0,t_0))}{w-s_-}\,\dd w,
\end{split}
\end{equation}
so
\begin{equation}
\|\mathcal{B}-\mathcal{A}\|_{C^\mu(\Gamma)\circlearrowleft}\le \|\mathcal{C}^\Gamma_-\|_{C^\mu(\Gamma)\circlearrowleft}\|\overline{\mathbf{W}}_0(\diamond;x,t)-\overline{\mathbf{W}}_0(\diamond;x_0,t_0)\|_{C^\mu(\Gamma)}\lesssim \|\overline{\mathbf{W}}_0(\diamond;x,t)-\overline{\mathbf{W}}_0(\diamond;x_0,t_0)\|_{C^\nu(\Gamma)},
\end{equation}
because the Cauchy projector $\mathcal{C}^\Gamma_-$ is bounded on $C^\mu(\Gamma)$ by the Plemelj-Privalov theorem and $0<\mu<\nu<1$.  Since the implied constants are independent of $(x,t)$, it follows from Lemma~\ref{lem:convergence} that $\|\mathcal{B}-\mathcal{A}\|_{C^\mu(\Gamma)\circlearrowleft}\to 0$ as $(x,t)\to (x_0,t_0)$ in $\mathbb{R}^2$.  According to \eqref{Bi-Ai}, it then follows that also $\|\mathcal{B}^{-1}-\mathcal{A}^{-1}\|_{C^\mu(\Gamma)\circlearrowleft}\to 0$ as $(x,t)\to (x_0,t_0)$ in $\mathbb{R}^2$ because $\mathcal{A}^{-1}$ is independent of $(x,t)$.

According to \eqref{N-Cauchy}, 
\begin{equation}
\begin{split}
\mathbf{N}(z;x,t)-\mathbf{N}(z;x_0,t_0)&=\mathcal{C}^\Gamma[\mathbf{X}(\diamond;x,t)(\overline{\mathbf{W}}_0(\diamond;x,t)-\mathbb{I})](z)-\mathcal{C}^\Gamma[\mathbf{X}(\diamond;x_0,t_0)(\overline{\mathbf{W}}_0(\diamond;x_0,t_0)-\mathbb{I})](z)\\
&=\mathcal{C}^\Gamma[\mathbf{X}(\diamond;x,t)(\overline{\mathbf{W}}_0(\diamond;x,t)-\mathbb{I})-\mathbf{X}(\diamond;x_0,t_0)(\overline{\mathbf{W}}_0(\diamond;x_0,t_0)-\mathbb{I})](z),
\end{split}
\end{equation}
where $\mathbf{X}(s;x,t)=\mathcal{B}^{-1}[\mathbb{I}](s)$ and $\mathbf{X}(s;x_0,t_0)=\mathcal{A}^{-1}[\mathbb{I}](s)$ for $s\in\Gamma$.  Again by the Plemelj-Privalov theorem, the H\"older norm of $\mathcal{C}^\Gamma[\mathbf{X}](z)$ with exponent $\mu$ taken for $z$ ranging over any connected component of $\mathbb{C}\setminus\Gamma$ is bounded by a multiple of $\|\mathbf{X}\|_{C^\mu(\Gamma)}$.  Therefore, since the sup norm is dominated by the H\"older norm,
\begin{equation}
\sup_{z\in\mathbb{C}\setminus\Gamma}\|\mathbf{N}(z;x,t)-\mathbf{N}(z;x_0,t_0)\|\lesssim \|\mathbf{X}(\diamond;x,t)(\overline{\mathbf{W}}_0(\diamond;x,t)-\mathbb{I})-\mathbf{X}(\diamond;x_0,t_0)(\overline{\mathbf{W}}_0(\diamond;x_0,t_0)-\mathbb{I})\|_{C^\mu(\Gamma)}.
\end{equation}
By adding and subtracting terms appropriately, %we may express the density of the Cauchy integral above as
\begin{multline}
\mathbf{X}(\diamond;x,t)\left(\overline{\mathbf{W}}_0(\diamond;x,t) - \mathbb{I}\right) - \mathbf{X}(\diamond;x_0,t_0)\left(\overline{\mathbf{W}}_0(\diamond;x_0,t_0) - \mathbb{I}\right) =\\
 (\mathbf{X}(\diamond;x_0,t_0) - \mathbf{X}(\diamond;x,t))+  \left( \mathbf{X}(\diamond;x,t) -   \mathbf{X}(\diamond;x_0,t_0)\right) \left(\overline{\mathbf{W}}_0(\diamond;x,t) - \overline{\mathbf{W}}_0(\diamond;x_0,t_0) \right) \\+ \left( \mathbf{X}(\diamond;x,t) -   \mathbf{X}(\diamond;x_0,t_0)\right) \overline{\mathbf{W}}_0(\diamond;x_0,t_0) 
+ \mathbf{X}(\diamond;x_0,t_0) \left(\overline{\mathbf{W}}_0(\diamond;x,t) -  \overline{\mathbf{W}}_0(\diamond;x_0,t_0)\right).
\end{multline}
But $\|\mathcal{B}^{-1}-\mathcal{A}^{-1}\|_{C^\mu(\Gamma)\circlearrowleft}\to 0$ as $(x,t)\to (x_0,t_0)$ in $\mathbb{R}^2$ implies that $\|\mathbf{X}(\diamond;x,t)-\mathbf{X}(\diamond;x_0,t_0)\|_{C^\mu(\Gamma)}\to 0$ in the same limit, and since $0<\mu<\nu<1$, Lemma~\ref{lem:convergence} implies $\|\overline{\mathbf{W}}_0(\diamond;x,t)-\overline{\mathbf{W}}_0(\diamond;x_0,t_0)\|_{C^\mu(\Gamma)}\to 0$ in the same limit as well.  Since Lemma~\ref{lem:HoelderJumpGamma} implies that $\overline{\mathbf{W}}_0(\diamond;x_0,t_0)\in C^\nu(\Gamma)\subset C^\mu(\Gamma)$, and since $C^\mu(\Gamma)$ is a Banach algebra, it then follows that
\begin{equation}
\lim_{(x,t)\to (x_0,t_0)}\sup_{z\in\mathbb{C}\setminus\Gamma}\|\mathbf{N}(z;x,t)-\mathbf{N}(z;x_0,t_0)\|=0.
\label{eq:N-convergence}
\end{equation}
By the identity $\mathbf{M}(\lambda;x,t)=\mathbf{N}(1;x,t)^{-1}\mathbf{N}(z(\lambda);x,t)$ valid for all $(x,t)\in\mathbb{R}^2$, it then follows also that
\begin{equation}
\begin{split}
\mathbf{M}(\lambda;x,t)-\mathbf{M}(\lambda;x_0,t_0)&=\mathbf{N}(1;x,t)^{-1}\mathbf{N}(z(\lambda);x,t)-\mathbf{N}(1;x_0,t_0)^{-1}\mathbf{N}(z(\lambda);x_0,t_0)\\
&=
(\mathbf{N}(1;x,t)^{-1}-\mathbf{N}(1;x_0,t_0)^{-1})(\mathbf{N}(z(\lambda);x,t)-\mathbf{N}(z(\lambda);x_0,t_0))\\
&\quad{}+(\mathbf{N}(1;x,t)^{-1}-\mathbf{N}(1;x_0,t_0)^{-1})\mathbf{N}(z(\lambda);x_0,t_0)\\
&\quad{}+\mathbf{N}(1;x_0,t_0)^{-1}(\mathbf{N}(z(\lambda);x,t)-\mathbf{N}(z(\lambda);x_0,t_0)),
\end{split}
\end{equation}
so, using the facts that $\det(\mathbf{N}(z;x,t))=1$ for all $z\in\mathbb{C}\setminus\Gamma$ and $(x,t)\in\mathbb{R}^2$, and that $\mathbf{N}(1;x,t)$ is well-defined, the use of \eqref{eq:N-convergence} completes the proof.
\end{proof}

\section{Extreme Superposition}
\label{sec:superposition}
In this section we carry out a nonlinear superposition procedure and place coherent structures on top of the background field $q_0(x,t)$ given by \rhref{rhp:background}. 
Fix $(x_0,t_0)\in\mathbb{R}^2$, recall the radius $r>\sup |\Sigma\setminus \mathbb{R}| $ from \eqref{Mobius} and let $\Sigma_\circ$ denote the circle $|\lambda|=r$. Define the arcs
\begin{equation}
\begin{aligned}
  \Sigma_\circ^{\pm} &:= \lbrace \lambda\in \Sigma_\circ \colon \pm \Im\{\lambda\}\geq 0 \rbrace,\\
%  \Sigma_{-} &:= \lbrace \lambda\in \Sigma_\circ \colon \Im\{\lambda\}\leq 0 \rbrace,\\
  \Sigma_{\mathrm{L}} &:= \lbrace \lambda\in \mathbb{R} \colon \lambda\in(-\infty,-r] \rbrace,\\
  \Sigma_{\mathrm{R}} &:= \lbrace \lambda\in \mathbb{R} \colon \lambda\in[r,+\infty) \rbrace.
\end{aligned}
\end{equation}
We orient both semicircular arcs $\Sigma_\circ^{\pm}$ in the direction from $\lambda=- r$ to $\lambda= r$, $\Sigma_{\mathrm{L}}$ from $\lambda=-\infty$ to $\lambda=- r$, and $\Sigma_{\mathrm{R}}$ from $\lambda=r$ to $\lambda=+\infty$. 
To describe the regions separated by this system of contours, set
\begin{equation}
  D_{\infty}^{\pm}:= \lbrace \lambda\in\mathbb{C}\colon |\lambda|>r~\text{and}~\pm\Im\{\lambda\} > 0 \rbrace,
\end{equation}
and let $D_\circ$ denote the open disk whose boundary is $\Sigma_\circ$.

Define from the solution of \rhref{rhp:background} the related matrix function
\begin{equation}
\mathbf{U}(\lambda;x,t) : = \mathbf{M}(\lambda;x,t) \ee^{\ii \lambda(x + \lambda t)\sigma_3},
\end{equation}
which is a fundamental matrix of simultaneous solutions of the system of Lax pair equations \eqref{Lax-pair}.
As in \cite[Proposition 2.1]{BilmanM2019}, we also define the simultaneous solution matrix $\mathbf{F}(\lambda;x,t)$ for the system of Lax pair equations \eqref{Lax-pair} that is normalized to satisfy
\begin{equation}
\mathbf{F}(\lambda;x_0,t_0) = \mathbb{I}.
\label{F-normalization}
\end{equation}
Note that $\mathbf{F}(\lambda;x,t)$ can be constructed in practice by  
\begin{equation}
\mathbf{F}(\lambda;x,t) = \mathbf{U}(\lambda;x,t) \mathbf{U}(\lambda;x_0,t_0)^{-1}.
\label{F-to-U}
\end{equation} 
Moreover, $\mathbf{F}(\lambda;x,t)$ is unimodular and entire in $\lambda$ for $\lambda\in\mathbb{C}$, see \cite[Proposition 2.1]{BilmanM2019}. 
We now introduce the following piecewise-defined solution matrix for the system of Lax pair equations \eqref{Lax-pair}:
\begin{equation}
\mathbf{U}^{[0]}(\lambda;x,t) :=
\begin{cases} 
\mathbf{U}(\lambda;x,t),&\quad \lambda \in D_\infty^+ \cup D_\infty^-,\\
\mathbf{F}(\lambda;x,t),&\quad \lambda \in D_\circ.
\end{cases}
\label{U-0}
\end{equation}
Recalling that $\Sigma_\circ^+$ and $\Sigma_\circ^+$ are both oriented from $-r$ to $r$, the relation \eqref{F-to-U} implies the following jump conditions satisfied by the (continuous) boundary values of $\mathbf{U}^{[0]}(\lambda;x,t)$ as $\lambda$ tends to a point $\zeta\in\Sigma_\circ$:
\begin{align}
\mathbf{U}^{[0]}_{+}(\zeta;x,t) = \mathbf{U} (\zeta;x,t) = \mathbf{F}(\zeta;x,t) \mathbf{U} (\zeta;x_0,t_0) = \mathbf{U}^{[0]}_{-}(\zeta;x,t) \mathbf{U} (\zeta;x_0,t_0),\qquad \zeta \in \Sigma_\circ^+,
\end{align}
and similarly,
\begin{align}
\mathbf{U}^{[0]}_{+}(\zeta;x,t) = \mathbf{F} (\zeta;x,t) = \mathbf{U}(\zeta;x,t) \mathbf{U} (\zeta;x_0,t_0)^{-1} = \mathbf{U}^{[0]}_{-}(\zeta;x,t) \mathbf{U} (\zeta;x_0,t_0)^{-1},\qquad \zeta \in \Sigma_\circ^-.
\end{align}
The related matrix
\begin{equation}
\mathbf{M}^{[0]}(\lambda;x,t) := \mathbf{U}^{[0]}(\lambda;x,t)  \ee^{\ii \lambda(x+\lambda t)\sigma_3}
\label{M-0}
\end{equation}
is analytic in $\lambda$ for $\lambda \in \mathbb{C}\setminus \Sigma^\sharp$, where $\Sigma^\sharp:=\Sigma_\circ \cup \Sigma_{\mathrm{L}} \cup \Sigma_{\mathrm{R}}$ with 
%satisfies the following additional properties:
%\begin{itemize}
%\item {\bf Jump conditions:} $\mathbf{M}^{[0]}(\lambda;x,t)$ admits 
H\"older continuous boundary values on $\Sigma^\sharp\setminus\{-r,r\}$ for all exponents $\mu<\nu$ satisfying
\begin{equation}
\mathbf{M}^{[0]}_+(\zeta;x,t)=\mathbf{M}^{[0]}_-(\zeta;x,t) \ee^{-\ii \zeta (x + \zeta t)\sigma_3} \mathbf{V}^{[0]}(\zeta) \ee^{\ii \zeta (x + \zeta t)\sigma_3},\qquad \zeta \in \Sigma^\sharp,
\label{M-0-jump}
\end{equation}
where
\begin{equation}
\mathbf{V}^{[0]}(\zeta):=
\begin{cases}
\mathbf{U}(\zeta;x_0,t_0),&\quad \zeta \in \Sigma_\circ^+,\\
\mathbf{U}(\zeta;x_0,t_0)^{-1},&\quad \zeta \in \Sigma_\circ^-,\\
\mathbf{V}^{\mathbb{R}}(\zeta),&\quad \zeta \in \Sigma_{\mathrm{L}}\cup \Sigma_{\mathrm{R}}.
\end{cases}
\label{V-0}
\end{equation}
Moreover, $\mathbf{M}^{[0]}(\lambda;x,t) \to \mathbb{I}$ as $\lambda\to\infty$ in $\mathbb{C}\setminus \Sigma^\sharp$ and $\det(\mathbf{M}^{[0]}(\lambda;x,t))=1$. Since $\mathbf{M}^{[0]}(\lambda;x,t) = \mathbf{M}(\lambda;x,t)$ for $|\lambda|>r$, we have
\begin{equation}
 q_0(x,t)= 2\ii  \lim_{\lambda\to\infty} \lambda M^{[0]}_{12}(\lambda;x,t).
\label{q-0-recovery}
\end{equation}
{
\begin{remark}
Another type of background field that can be included in this framework is a solution $q=q_0(x,t)$ of the Cauchy problem for \eqref{nls} with nonzero boundary conditions $q_0(x,t)-\ee^{\ii t}\to 0$ as $x\to\pm\infty$.  To handle this kind of background solution, instead of starting with \rhref{rhp:background} and converting it to $\mathbf{M}^{[0]}(\lambda;x,t)$ as indicated above, we may begin with the matrix $\mathbf{M}(\lambda;x,t)$ satisfying the conditions of \cite[Riemann-Hilbert Problem 1]{BilmanM2019} which encodes the initial data $q_0(x,0)$ into $q_0(x,t)$.  Then we simply define $\mathbf{M}^{[0]}(\lambda;x,t):=\ee^{\frac{1}{2}\ii t\sigma_3}\mathbf{M}(\lambda;x,t)\ee^{\ii (\lambda-\rho(\lambda))(x+\lambda t)\sigma_3}$.  Here $\rho(\lambda)$ is the function analytic except on the line segment connecting the points $\lambda=\pm\ii$ that is defined by the conditions $\rho(\lambda)^2=\lambda^2+1$ and $\rho(\lambda)=\lambda+O(\lambda^{-1})$ as $\lambda\to\infty$.  The latter function appears naturally in the context of the Beals-Coifman solutions appropriate for nonzero boundary conditions, but the transformation to $\mathbf{M}^{[0]}(\lambda;x,t)$ effectively replaces this function by $\lambda$.
\label{rem:nonzero-background}
\end{remark}
}
\subsection{Nonlinear superposition scheme}
Fix an arbitrary point $\xi$ in the upper $\lambda$-half-plane such that $|\xi|<r$, i.e., $\xi \in D_\circ$. Note that $r$ can be chosen arbitrarily large since all it needs to satisfy is that $r>\sup |\Sigma \setminus \mathbb{R}|$, hence $\xi\in\mathbb{C}^+$ is effectively arbitrary. Now let $\mathbf{G}^{[0]}(\lambda;x,t)$ denote the Darboux transformation matrix $\mathbf{G}_{\infty}(\lambda;x,t)$ constructed in \cite[Section 3.2, Eqns.\@ (3.37)--(3.39)]{BilmanM2019} for the focusing NLS equation in the context of the \emph{robust} modification of the inverse-scattering transform introduced therein. 
$\mathbf{G}^{[0]}(\lambda;x,t)$ is constructed from the \emph{seed} column-vector simultaneous solution of the Lax pair equations \eqref{Lax-pair} with the potential $q= q_0(x,t)$ given by
\begin{equation}
\mathbf{s}(x,t):= \mathbf{U}^{[0]}(\xi; x, t)\mathbf{c},
\end{equation}
where $\mathbf{c}\in \mathbb{C}^2\setminus \{ \mathbf{0}\}$ is a given parameter vector,
% with non-zero elements, 
and $\mathbf{G}^{[0]}(\lambda;x,t)$ is 
of the form
\begin{equation}
\mathbf{G}^{[0]}(\lambda;x,t) = \mathbb{I} + \frac{\mathbf{Y}^{[0]}(x,t)}{\lambda - \xi} + \frac{\mathbf{Z}^{[0]}(x,t)}{\lambda - \xi^*}.
\label{G-0}
\end{equation}
Here $\mathbf{Y}^{[0]}(x,t)$ and $\mathbf{Z}^{[0]}(x,t)$ are $2\times 2$ matrices which satisfy $\mathbf{Z}^{[0]}(x,t) = \sigma_2\mathbf{Y}^{[0]}(x,t)^*\sigma_2 $, and
\begin{multline}
\mathbf{Y}^{[0]}(x,t):= - \frac{4\beta^2 w(x,t)^*}{4\beta^2 |w(x,t)|^2 + N(x,t)^2} \mathbf{s}(x,t)\mathbf{s}(x,t)^\top \sigma_2 \\+ \frac{2 \ii \beta N(x,t)}{4\beta^2 |w(x,t)|^2 + N(x,t)^2} \sigma_2  \mathbf{s}(x,t)^* \mathbf{s}(x,t)^\top \sigma_2,
\label{Y}
\end{multline}
where
\begin{equation}
\beta:= \Im(\xi) >0, \qquad N(x,t):= \mathbf{s}(x,t)^\dagger  \mathbf{s}(x,t) = \|  \mathbf{s}(x,t) \|^2 >0,\qquad w(x,t) := \mathbf{s}(x,t)^\top \sigma_2  \mathbf{s}'(x,t),
\label{N-w}
\end{equation}
and $ \mathbf{s}'(x,t) := \frac{\dd}{\dd \lambda}\mathbf{U}^{[0]}(\lambda;x,t)\mathbf{c}\rvert_{\lambda=\xi}$. Note that the strict positivity of $N(x,t)$ follows from the fact that $\mathbf{U}^{[0]}(\xi;x,t)$ is unimodular. 
{
We see from \eqref{Y} that $\mathbf{G}^{[0]}(\lambda;x,t)$ does not depend on the Euclidean length $\|\mathbf{c}\|$ of the complex vector $\mathbf{c}=:[c_1\quad c_2]^\top$. Therefore, we may consider $\mathbf{c}$ to be an element of the complex projective space $\mathbb{CP}^1$ and write $\mathbf{c}=[c_1 \colon c_2]$.
}

Now, define 
\begin{equation}
\mathbf{U}^{[1]}(\lambda;x,t):=
\begin{cases}
\mathbf{G}^{[0]}(\lambda;x,t) \mathbf{U}^{[0]}(\lambda;x,t),&\quad \lambda \in D_{\infty}^+\cup D_{\infty}^-,\\
\mathbf{G}^{[0]}(\lambda;x,t) \mathbf{U}^{[0]}(\lambda;x,t) \mathbf{G}^{[0]}(\lambda;x_0,t_0)^{-1},&\quad \lambda \in D_{\circ}.
\end{cases}
\end{equation}
As was shown in \cite[Section 3.2]{BilmanM2019}, $\mathbf{U}^{[1]}(\lambda;x,t)$ satisfies the simultaneous system of Lax pair equations \eqref{Lax-pair} in which $q=q_0(x,t)$ is replaced with the potential $q=q^{[1]}(x,t)$ given by
\begin{equation}
q_1(x,t):= q_0(x,t) + 2 \ii \left(Y_{12}(x, t)-Y_{21}(x, t)^{*}\right).
\label{q1-def}
\end{equation}
Recall the definition \eqref{U-0} and observe that for $\lambda\in D_\circ$ we have
\begin{equation}
\mathbf{U}^{[1]}(\lambda;x,t) = \mathbf{G}^{[0]}(\lambda;x,t) \mathbf{F}(\lambda;x,t) \mathbf{G}^{[0]}(\lambda;x_0,t_0)^{-1},
\end{equation}
which extends to an analytic function of $\lambda$ at $\lambda=\xi$ and $\lambda=\xi^*$ and also maintains the property $\mathbf{U}^{[1]}(\lambda;x_0,t_0)=\mathbb{I}$ for $\lambda\in D_\circ$.
Thus, the matrix function
\begin{equation}
\mathbf{M}^{[1]}(\lambda;x,t) := \mathbf{U}^{[1]}(\lambda;x,t) \ee^{\ii \lambda(x+\lambda t)\sigma_3}
\end{equation}
is analytic in $\lambda$ for $\lambda\in \mathbb{C}\setminus \Sigma^\sharp$, that is, for $\lambda \in  D_\circ \cup D_{\infty}^+\cup  D_{\infty}^-$. It is easy to see from the jump conditions \eqref{M-0-jump}--\eqref{V-0} satisfied by $\mathbf{M}^{[0]}(\lambda;x,t)$ that $\mathbf{M}^{[1]}(\lambda;x,t)$ admits continuous boundary values on $\Sigma^\sharp$ that are related by the modified jump condition
\begin{equation}
\mathbf{M}^{[1]}_+(\zeta;x,t)=\mathbf{M}^{[1]}_-(\zeta;x,t) \ee^{-\ii \zeta (x + \zeta t)\sigma_3} \mathbf{V}^{[1]}(\zeta) \ee^{\ii \zeta (x + \zeta t)\sigma_3},\qquad \zeta \in \Sigma^\sharp,
\end{equation}
where
\begin{equation}
\mathbf{V}^{[1]}(\zeta):=
\begin{cases}
\mathbf{G}^{[0]}(\lambda;x_0,t_0) \mathbf{U}(\zeta;x_0,t_0),&\quad \zeta \in \Sigma_\circ^+,\\
\mathbf{U}(\zeta;x_0,t_0)^{-1}\mathbf{G}^{[0]}(\lambda;x_0,t_0)^{-1},&\quad \zeta \in \Sigma_\circ^-,\\
\mathbf{V}^{\mathbb{R}}(\zeta),&\quad \zeta \in \Sigma_{\mathrm{L}}\cup \Sigma_{\mathrm{R}},
\end{cases}
\label{V-1}
\end{equation}
and the transformation $\mathbf{M}^{[0]}(\lambda;x,t)\mapsto \mathbf{M}^{[1]}(\lambda;x,t)$ preserves the normalization: $\mathbf{M}^{[1]}(\lambda;x,t) \to \mathbb{I}$ as $\lambda\to \infty$. 
%Note that the product condition (consistency condition) for the jump matrix $\mathbf{V}^{[1]}(\zeta)$ at the self-intersection points $\zeta=\pm r$ of the jump contour $\Sigma^{\sharp}$ holds automatically since $\mathbf{M}^{[1]}(\lambda;x,t)$ 

As was summarized in \cite[Theorem 3.5]{BilmanM2019}, the function $q_1(x,t)$ defined by \eqref{q1-def} is recovered from
\begin{equation}
q_1(x,t) = 2\ii \lim_{\lambda\to \infty}\lambda M^{[1]}_{12}(\lambda;x,t)
\end{equation}
and it is a global solution of the focusing NLS equation \eqref{nls}. A virtue of this framework is that since $\mathbf{U}^{[1]}(\lambda;x,t)$ is holomorphic in $\lambda$ for $\lambda \in \mathbb{C}\setminus \Sigma^{\sharp}$, the Darboux transformation $\mathbf{M}^{[0]}\mapsto \mathbf{M}^{[1]}$ can be applied in an iterative manner arbitrarily many times even using the same special point $\lambda=\xi$. Thus, we define for $n=1,2,\ldots$,
\begin{equation}
\mathbf{U}^{[n]}(\lambda;x,t) := 
\begin{cases}
\mathbf{G}^{[n-1]}(\lambda;x,t)\mathbf{U}^{[n-1]}(\lambda;x,t),&\quad \lambda\in D_\infty^+ \cup D_\infty^-,\\
\mathbf{G}^{[n-1]}(\lambda;x,t)\mathbf{U}^{[n-1]}(\lambda;x,t)\mathbf{G}^{[n-1]}(\lambda;x_0,t_0)^{-1},&\quad \lambda\in D_\circ,\\
\end{cases}
\end{equation}
where $\mathbf{G}^{[n-1]}(\lambda;x,t)$ is the Darboux transformation matrix constructed from the simultaneous solution matrix $\mathbf{U}^{[n-1]}(\lambda;x,t)$ to the system of Lax pair equations \eqref{Lax-pair} with the potential $q=q_{n-1}(x,t)$, having the same form as \eqref{G-0} in which the seed solution is $\mathbf{U}^{[n-1]}(\lambda;x,t)\mathbf{c}$, where the column vector $\mathbf{c}$ is chosen to be \emph{the same} vector (with non-zero elements) at each stage of the iteration, i.e., for each $n=1,2,\ldots$. It is easy to see that the related matrix function
\begin{equation}
\mathbf{M}^{[n]}(\lambda;x,t)  := \mathbf{U}^{[n]}(\lambda;x,t) \ee^{\ii \lambda(x+\lambda t)\sigma_3}
\label{M-n}
\end{equation}
satisfies the following jump condition on $\Sigma^\sharp$:
\begin{equation}
\mathbf{M}^{[n]}_+(\zeta;x,t)=\mathbf{M}^{[n]}_-(\zeta;x,t) \ee^{-\ii \zeta (x + \zeta t)\sigma_3} \mathbf{V}^{[n]}(\zeta) \ee^{\ii \zeta (x + \zeta t)\sigma_3},\qquad \zeta \in \Sigma^\sharp,
\end{equation}
where the ``core'' of the jump matrix is 
\begin{equation}
\mathbf{V}^{[n]}(\zeta):=
\begin{cases}
\mathbf{G}^{[n-1]}(\zeta;x_0,t_0)\cdots \mathbf{G}^{[0]}(\zeta;x_0,t_0) \mathbf{U}(\zeta;x_0,t_0),&\quad \zeta \in \Sigma_\circ^+,\\
\mathbf{U}(\zeta;x_0,t_0)^{-1}\mathbf{G}^{[0]}(\zeta;x_0,t_0)^{-1} \cdots \mathbf{G}^{[n-1]}(\zeta;x_0,t_0)^{-1},&\quad \zeta \in \Sigma_\circ^-,\\
\mathbf{V}^{\mathbb{R}}(\zeta),&\quad \zeta \in \Sigma_{\mathrm{L}}\cup \Sigma_{\mathrm{R}}.
\end{cases}
%\label{V-n}
\end{equation}
Next, we note the remarkable simplification \cite[Proposition 3.10]{BilmanM2019} that occurs in the products of $\mathbf{G}^{[k]}(\zeta;x_0,t_0)$ that appear in the jump matrix supported on the arcs $\Sigma_\circ^{\pm}$. Observe that for any integer $n\geq 1$ we have for $\lambda\in D_\circ$
\begin{equation}
\mathbf{U}^{[n]}(\lambda;x,t) = \mathbf{G}^{[n-1]}(\lambda;x,t)\cdots \mathbf{G}^{[0]}(\lambda;x,t) \mathbf{F}(\lambda;x,t) \mathbf{G}^{[0]}(\lambda;x,t)^{-1}\cdots \mathbf{G}^{[n-1]}(\lambda;x,t)^{-1},
\end{equation}
hence $\mathbf{U}^{[n]}(\xi;x_0,t_0) = \mathbb{I}$ for all $\lambda \in D_0$, implying $\frac{\dd}{\dd \lambda} \mathbf{U}^{[n]}(\lambda;x_0,t_0) \equiv \mathbf{0}$ for $\lambda\in D_0$, in particular, at $\lambda=\xi,\xi^*$. Thus, the building blocks of $\mathbf{G}^{[n]}(\lambda;x,t)$ given in \eqref{Y}--\eqref{N-w}
satisfy
\begin{equation}
\mathbf{s}(x_0,t_0) = \mathbf{c},\qquad N(x_0,t_0) = \|\mathbf{c}\|^2,\qquad w(x_0,t_0) = 0.
\label{N-w-0}
\end{equation}
Since these quantities are independent of $n$, it follows that
\begin{equation}
\mathbf{G}^{[n]}(\lambda;x_0,t_0) = \mathbf{G}^{[0]}(\lambda;x_0,t_0)
\end{equation}
for any integer $n\geq 0$ and for all $\lambda\in\mathbb{C}\setminus \{ \xi, \xi^* \}$. Thus, the matrix function $\mathbf{M}^{[n]}(\lambda;x,t)$ defined by \eqref{M-n}
solves the following Riemann-Hilbert problem.
\begin{rhp}\label{rhp:M-n}
Find a $2\times 2$ matrix valued function $\mathbf{M}^{[n]}(\lambda)=\mathbf{M}^{[n]}(\lambda;x,t)$ with the following properties.
\begin{itemize}
\item {\bf Analyticity:} $\mathbf{M}^{[n]}(\lambda)$ is analytic for $\lambda \in \mathbb{C}\setminus \Sigma^\sharp$.
\item {\bf Jump conditions:} $\mathbf{M}^{[n]}(\lambda)$ admits H\"older continuous boundary values on $\Sigma^\sharp$ for all exponents $\mu<\nu$ as $\lambda\to \zeta$ for $\zeta\in\Sigma^{\sharp}$, and these boundary values are related by the following jump condition
\begin{equation}
\mathbf{M}^{[n]}_+(\zeta)=\mathbf{M}^{[n]}_-(\zeta) \ee^{-\ii \zeta (x + \zeta t)\sigma_3} \mathbf{V}^{[n]}(\zeta) \ee^{\ii \zeta (x + \zeta t)\sigma_3},\qquad \zeta \in \Sigma,
\end{equation}
where the ``core'' of the jump matrix is 
\begin{equation}
\mathbf{V}^{[n]}(\zeta)=
\begin{cases}
\mathbf{G}^{[0]}(\zeta;x_0,t_0)^n \mathbf{U}(\zeta;x_0,t_0),&\quad \zeta \in \Sigma_\circ^+,\\
\mathbf{U}(\zeta;x_0,t_0)^{-1}\mathbf{G}^{[0]}(\zeta;x_0,t_0)^{-n},&\quad \zeta \in \Sigma_\circ^-,\\
\mathbf{V}^{\mathbb{R}}(\zeta),&\quad \zeta \in \Sigma_{\mathrm{L}}\cup \Sigma_{\mathrm{R}},
\end{cases}
\label{V-n}
\end{equation}
\item {\bf Normalization:} $\mathbf{M}^{[n]}(\lambda) \to \mathbb{I}$ as $\lambda\to\infty$ in $\mathbb{C}\setminus \Sigma$.
\end{itemize}
\end{rhp}
Note that the jump matrix supported on $\Sigma_\mathrm{L}\cup \Sigma_\mathrm{R} \subset \mathbb{R}$ remains unmodified (see \eqref{V-1}, or more generally, \eqref{V-n}) upon application of the Darboux transformation is because the action of the transformation
$\mathbf{M}^{[n-1]} \mapsto \mathbf{M}^{[n]}$ for $|\lambda|>r$ is nothing but multiplying $\mathbf{M}^{[n-1]}(\lambda;x,t)$ on the left by $\mathbf{G}^{[n-1]}(\lambda;x,t)$ which is analytic in $\lambda$ for $|\lambda|>r$. As was shown in \cite{BilmanM2019}, \rhref{rhp:M-n} is uniquely solvable and 
\begin{equation}
q_n(x,t):=2\ii \lim_{\lambda\to\infty} \lambda M_{12}^{[n]}(\lambda;x,t)
\label{q-n-recovery}
\end{equation}
is a global solution of the focusing NLS equation \eqref{nls}.

We now use the Riemann-Hilbert matrix $\mathbf{M}^{[0]}(\lambda;x,t)$ defined by \eqref{U-0} and \eqref{M-0} (solution of \rhref{rhp:M-n} with $n=0$) for the background field $q_0(x,t)$ as a global parametrix and introduce the perturbative matrix function
\begin{equation}
\mathbf{P}^{[n]}(\lambda;x,t) := \mathbf{M}^{[n]}(\lambda;x,t) \mathbf{M}^{[0]}(\lambda;x,t)^{-1}.
\label{P-n}
\end{equation}
First, note from \rhref{rhp:background} and \rhref{rhp:M-n} that $\mathbf{M}^{[0]}(\lambda;x,t)$ and $\mathbf{M}^{[n]}(\lambda;x,t)$ satisfy exactly the same jump condition on $\Sigma_\mathrm{L}\cup \Sigma_\mathrm{R}$. Thus, $\mathbf{P}^{[n]}(\lambda;x,t)$ extends analytically across $\Sigma_\mathrm{L}\cup \Sigma_\mathrm{R}$, hence the jump discontinuity on these arcs is removed. Next, for $\zeta\in\Sigma_\circ^+$ we have
\begin{equation}
\begin{aligned}
\mathbf{P}^{[n]}_+(\zeta;x,t) &= \mathbf{M}^{[n]}_+(\zeta;x,t) \left( \mathbf{U}(\zeta;x,t) \ee^{\ii \zeta(x+\zeta t)\sigma_3}\right)^{-1}\\
&= \mathbf{M}^{[n]}_-(\zeta;x,t)  \ee^{-\ii \zeta(x+\zeta t)\sigma_3} \mathbf{G}^{[0]}(\zeta;x_0,t_0)^n \mathbf{U}(\zeta;x_0,t_0) \ee^{\ii \zeta(x+\zeta t)\sigma_3}\left( \mathbf{U}(\zeta;x,t) \ee^{\ii \zeta(x+\zeta t)\sigma_3}\right)^{-1}\\
&= \mathbf{M}^{[n]}_-(\zeta;x,t) \mathbf{M}_-^{[0]}(\zeta;x,t)^{-1} \left(\mathbf{F}(\zeta;x,t)\ee^{\ii \zeta(x+\zeta t)\sigma_3} \right) \ee^{-\ii \zeta(x+\zeta t)\sigma_3} \mathbf{G}^{[0]}(\zeta;x_0,t_0)^n \mathbf{F}(\zeta;x,t)^{-1} \\
&= \mathbf{P}^{[n]}_-(\zeta;x,t)\mathbf{F}(\zeta;x,t) \mathbf{G}^{[0]}(\zeta;x_0,t_0)^n \mathbf{F}(\zeta;x,t)^{-1}.
\end{aligned}
\label{P-jump-plus}
\end{equation}
Similarly, for $\zeta\in\Sigma_\circ^-$,
\begin{equation}
\begin{aligned}
\mathbf{P}^{[n]}_+(\zeta;x,t)&=\mathbf{M}^{[n]}_+(\zeta;x,t)\left(\mathbf{F}(\zeta;x,t)\ee^{\ii\zeta(x+\zeta t)\sigma_3}\right)^{-1}\\
&=\mathbf{M}^{[n]}_-(\zeta;x,t)\ee^{-\ii\zeta(x+\zeta t)\sigma_3}\mathbf{U}(\zeta;x_0,t_0)^{-1}\mathbf{G}^{[0]}(\zeta;x_0,t_0)^{-n}\ee^{\ii\zeta(x+\zeta t)\sigma_3}\left(\mathbf{F}(\zeta;x,t)\ee^{\ii\zeta(x+\zeta t)\sigma_3}\right)^{-1}\\
&=\mathbf{M}_-^{[n]}(\zeta;x,t)\mathbf{M}_-^{[0]}(\zeta;x,t)^{-1}\mathbf{F}(\zeta;x,t)\mathbf{G}^{[0]}(\zeta;x_0,t_0)^{-n}\mathbf{F}(\zeta;x,t)^{-1}\\
&=\mathbf{P}^{[n]}_-(\zeta;x,t)\mathbf{F}(\zeta;x,t)\mathbf{G}^{[0]}(\zeta;x_0,t_0)^{-n}\mathbf{F}(\zeta;x,t)^{-1}.
\end{aligned}
\label{P-jump-minus}
\end{equation}
It is also immediate from the definition \eqref{P-n} that $\mathbf{P}^{[n]}(\lambda;x,t)\to \mathbb{I}$ as $\lambda\to \infty$. 
Since the jump matrices computed in \eqref{P-jump-plus} and \eqref{P-jump-minus} are inverses of each other, we reverse the orientation of the semi-circle $\Sigma_\circ^-$ so that the circle $\Sigma_\circ$ is oriented clockwise. Then $\mathbf{P}^{[n]}(\lambda;x,t)$ solves the following Riemann-Hilbert problem:

\begin{rhp}\label{rhp:perturbative}
Let $(x,t)\in\mathbb{R}^2$ be arbitrary parameters, and let $n\in \mathbb{Z}_{\geq 0}$. Find a $2\times 2$ matrix function $\mathbf{P}^{[n]}(\lambda;x,t)$ that has the following properties:
  \begin{itemize}
    \item \textbf{Analyticity:} $\mathbf{P}^{[n]}(\lambda;x,t)$ is analytic for $\lambda \in \mathbb{C}\setminus \Sigma_\circ$.
    \item \textbf{Jump Condition:} $\mathbf{P}^{[n]}(\lambda;x,t)$ takes continuous boundary values on $\Sigma_\circ$ denoted by $\mathbf{P}^{[n]}_\pm(\zeta;x,t)$, $\zeta\in\Sigma_\circ$, and they are related by the following jump condition:
\begin{equation}
\mathbf{P}^{[n]}_+(\zeta;x,t) = \mathbf{P}^{[n]}_-(\zeta;x,t)
\mathbf{F}(\zeta;x,t)\mathbf{G}^{[0]}(\zeta;x_0,t_0)^n\mathbf{F}(\zeta;x,t)^{-1},\quad \zeta \in \Sigma_\circ,
\label{P-jump-circle}
\end{equation}
in which the circle $\Sigma_\circ$ is now given clockwise orientation (for both semicircles).
\item \textbf{Normalization:} $ \mathbf{P}^{[n]}(\lambda;x,t) \to \mathbb{I}$ as $\lambda\to\infty$.
\end{itemize}
\end{rhp}
It follows from the formul\ae{} \eqref{q-0-recovery} and \eqref{q-n-recovery} that
\begin{equation}
q_n(x,t) = q_0(x,t) + 2\ii \lim_{\lambda\to \infty}\lambda P^{[n]}_{12}(\lambda;x,t).
\label{eq:qn-q0}
\end{equation}
Incidentally, the value $\mathbf{G}^{[0]}(\lambda;x_0,t_0)$ at $(x,t)=(x_0,t_0)$ of the matrix $\mathbf{G}^{[0]}(\lambda;x,t)$ constructed here coincides with the value of its analogue used in \cite{BilmanB2019} to construct multiple-pole solitons on a zero background. The reason is because in either case we have used the simultaneous solution $\mathbf{F}(\lambda;x,t)$ of the Lax pair equations that is normalized to be the identity matrix at $(x,t)=(x_0,t_0)$ in a disk large enough to contain $\lambda=\xi$ ($x_0$ was taken to be $0$ and $t_0=0$ in \cite{BilmanB2019}, see Remark~\ref{rem:point-L} below), thus the presence of an underlying non-trivial background field $q_0$ is invisible to $\mathbf{F}(\lambda;x,t)$ whenever $(x,t)=(x_0,t_0)$, and hence to $\mathbf{G}^{[0]}(\lambda;x,t)$ whenever $(x,t)=(x_0,t_0)$. By direct calculation using \eqref{N-w-0} we have
\begin{equation}
\mathbf{G}^{[0]}(\lambda;x_0,t_0) = \mathbb{I} + \frac{2\ii \beta}{\|\mathbf{c}\|^2} 
\begin{bmatrix} |c_2|^2 & - c_1 c_2^* \\ -c_1^* c_2 & |c_1|^2 \end{bmatrix}\frac{1}{\lambda-\xi} -  \frac{2\ii \beta}{\|\mathbf{c}\|^2} 
\begin{bmatrix} |c_1|^2 &  c_1 c_2^* \\ c_1^* c_2 & |c_2|^2 \end{bmatrix}\frac{1}{\lambda-\xi^*},
\end{equation}
and $\mathbf{G}^{[0]}(\lambda;x_0,t_0)$ can be diagonalized as
\begin{equation}
\mathbf{G}^{[0]}(\lambda;x_0,t_0) =\mathbf{Q}\left(  \frac{\lambda - \xi}{\lambda-\xi^*} \right)^{\sigma_3}\mathbf{Q}^{-1},\qquad \mathbf{Q}:=\frac{1}{\|\mathbf{c}\|}\begin{bmatrix} 
c_1 & -c_2^* \\ c_2 & c_1^* \end{bmatrix}.
\label{eq:Gnaught}
\end{equation}
{
\begin{remark} 
\label{rem:c1-c2-0}
Although $\mathbf{c}=[1\colon 0]$ and $\mathbf{c}=[0\colon 1]$ are elements of $\mathbb{CP}^1$ that are perfectly fine to use as data in the construction of $\mathbf{G}(\lambda;x,t)$, it is easy to see that 
\begin{equation}
\mathbf{G}^{[0]}(\zeta;x_0,t_0)^n = 
\begin{cases}
\displaystyle\left(\frac{\zeta - \xi}{\zeta-\xi^*}\right)^{-n\sigma_3},&\quad\text{if}~c_1=0,\\
\displaystyle\left(\frac{\zeta - \xi}{\zeta-\xi^*}\right)^{n\sigma_3},&\quad\text{if}~c_2=0.\\
\end{cases}
\end{equation}
This has two consequences. First, if the background field $q_0(x,t)$ is chosen such that the associated simultaneous solution matrix $\mathbf{F}(\lambda;x,t)$ (see \eqref{F-normalization}--\eqref{F-to-U}) is diagonal, then the jump matrix in \eqref{P-jump-circle} becomes diagonal and independent of $(x,t)$ whenever $c_1=0$ or $c_2=0$. Then \rhref{rhp:perturbative} can be solved explicitly and it follows that $q_n(x,t)=q_0(x,t)$ for all $(x,t)\in \mathbb{R}^2$ and for each $n=1,2,\ldots$, i.e., the underlying B\"acklund transformation degenerates. This happens, for instance, when $q_0\equiv 0$, in which case $\mathbf{F}(\zeta;x,t)= \ee^{-\ii(\zeta(x-x_0)+\zeta^2(t-t_0))\sigma_3}$. This is why the choice $c_1c_2\neq 0$ was employed in \cite{BilmanB2019} for solitons on the zero background. 
Second, regardless of what the background field is (i.e., regardless of what the matrix $\mathbf{F}$ in \eqref{P-jump-circle} is), the jump matrix in \eqref{P-jump-circle} becomes diagonal if evaluated at $(x,t)=(x_0,t_0)$ whenever $c_1=0$ or $c_2=0$ due to \eqref{F-normalization}. This implies that $q_n(x_0,t_0) = q_0(x_0,t_0)$ for each $n=1,2,\ldots$, but the equality need not hold for other values of $(x,t)$. The simplest example of this scenario is when $q_0 \equiv \ee^{\ii t}$, in which case $\mathbf{F}(\zeta;x,t)$ is the full matrix given in \cite[Eqn.\@ (16)]{BilmanLM2020} and
$q_n(x,t) \neq q_0(x_0,t_0)$ for general values of $(x,t)$ even when $c_1=0$ or $c_2=0$ in this case. Indeed, plotting $|q_1(x,t)|$ when $q_0\equiv \ee^{\ii t}$ shows that it is a localized bump with a peak attained at a location other than $(x_0,t_0)$.
\end{remark}
}
\begin{proposition} 
\label{prop:qn-growth}
The amplitude $|q_n(x_0,t_0)|$ grows proportionally to $n$ as $n$ increases {if $c_1c_2 \neq 0$}. In particular,
\begin{equation}
q_n(x_0,t_0) = q_0(x_0, t_0) +8\beta \frac{ c_1 c_2^*}{\|\mathbf{c}\|^2}n.
\label{qn-amplitude}
\end{equation}
\end{proposition}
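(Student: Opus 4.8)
The plan is to evaluate the recovery formula \eqref{eq:qn-q0} at the base point $(x,t)=(x_0,t_0)$, where \rhref{rhp:perturbative} becomes explicitly solvable. By the normalization \eqref{F-normalization} we have $\mathbf{F}(\zeta;x_0,t_0)=\mathbb{I}$, so the conjugated jump matrix $\mathbf{F}(\zeta;x,t)\mathbf{G}^{[0]}(\zeta;x_0,t_0)^n\mathbf{F}(\zeta;x,t)^{-1}$ in \eqref{P-jump-circle} collapses at $(x,t)=(x_0,t_0)$ to the explicitly known matrix $\mathbf{G}^{[0]}(\zeta;x_0,t_0)^n$. By the diagonalization \eqref{eq:Gnaught}, this equals $\mathbf{Q}\left(\frac{\zeta-\xi}{\zeta-\xi^*}\right)^{n\sigma_3}\mathbf{Q}^{-1}$, whose only singularities are located at $\lambda=\xi,\xi^*$, both interior to the circle $\Sigma_\circ$ because $|\xi|<r$, and which tends to $\mathbb{I}$ as $\lambda\to\infty$.

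First I would solve this base-point problem by an explicit ansatz. After the reorientation described before \rhref{rhp:perturbative}, $\Sigma_\circ$ is traversed clockwise, so the boundary value from the exterior region $|\lambda|>r$ is the $+$ value and that from the interior $|\lambda|<r$ is the $-$ value. I would then propose
\[
\mathbf{P}^{[n]}(\lambda;x_0,t_0)=\begin{cases}\mathbf{G}^{[0]}(\lambda;x_0,t_0)^n,&|\lambda|>r,\\ \mathbb{I},&|\lambda|<r.\end{cases}
\]
This is analytic off $\Sigma_\circ$ (the poles $\xi,\xi^*$ lie in the region where $\mathbf{P}^{[n]}\equiv\mathbb{I}$), satisfies the normalization at $\lambda=\infty$, and realizes the jump $\mathbf{P}^{[n]}_+=\mathbf{P}^{[n]}_-\mathbf{G}^{[0]}(\zeta;x_0,t_0)^n$ since $\mathbf{P}^{[n]}_-\equiv\mathbb{I}$ on $\Sigma_\circ$. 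By the unique solvability of \rhref{rhp:perturbative}, this is the solution.

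It then remains to extract the $\lambda^{-1}$ coefficient. Expanding $\frac{\lambda-\xi}{\lambda-\xi^*}=1-2\ii\beta\lambda^{-1}+O(\lambda^{-2})$ with $\beta=\Im(\xi)$ gives $\left(\frac{\lambda-\xi}{\lambda-\xi^*}\right)^{n\sigma_3}=\mathbb{I}-2\ii n\beta\lambda^{-1}\sigma_3+O(\lambda^{-2})$, so for $|\lambda|>r$,
\[
\mathbf{P}^{[n]}(\lambda;x_0,t_0)=\mathbb{I}-\frac{2\ii n\beta}{\lambda}\,\mathbf{Q}\sigma_3\mathbf{Q}^{-1}+O(\lambda^{-2}).
\]
A direct computation with $\mathbf{Q}$ as in \eqref{eq:Gnaught} yields $(\mathbf{Q}\sigma_3\mathbf{Q}^{-1})_{12}=2c_1c_2^*/\|\mathbf{c}\|^2$, whence $\lim_{\lambda\to\infty}\lambda P^{[n]}_{12}(\lambda;x_0,t_0)=-4\ii n\beta c_1c_2^*/\|\mathbf{c}\|^2$. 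Substituting into \eqref{eq:qn-q0} produces exactly \eqref{qn-amplitude}, and the proportional growth of $|q_n(x_0,t_0)|$ follows provided $c_1c_2\neq 0$.

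I do not expect a serious analytic obstacle: once one recognizes that the base point trivializes the conjugating factors $\mathbf{F}$, the problem reduces to reading off a Laurent coefficient of an explicit rational matrix. The only genuine care is bookkeeping---correctly tracking the clockwise orientation so that the exterior and interior boundary values are identified as $\pm$, and checking that the piecewise ansatz is compatible with analyticity at $\xi,\xi^*$. The hypothesis $c_1c_2\neq 0$ enters precisely as the requirement that the off-diagonal entry $(\mathbf{Q}\sigma_3\mathbf{Q}^{-1})_{12}$ be nonzero, in agreement with the degeneracy recorded in Remark~\ref{rem:c1-c2-0}.
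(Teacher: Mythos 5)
Your proposal is correct and follows essentially the same route as the paper: the normalization $\mathbf{F}(\zeta;x_0,t_0)=\mathbb{I}$ collapses the jump in \eqref{P-jump-circle} to $\mathbf{G}^{[0]}(\zeta;x_0,t_0)^n$, the problem is solved by the explicit piecewise formula ($\mathbb{I}$ inside $\Sigma_\circ$, $\mathbf{Q}\left(\frac{\lambda-\xi}{\lambda-\xi^*}\right)^{n\sigma_3}\mathbf{Q}^{-1}$ outside), and the result follows by reading off the $\lambda^{-1}$ coefficient and inserting it into \eqref{eq:qn-q0}. The only difference is cosmetic: where the paper asserts the piecewise solution is ``easy to see,'' you verify it via an ansatz together with orientation bookkeeping and uniqueness, which is exactly the justification the paper leaves implicit.
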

\begin{proof}
From the normalization \eqref{F-normalization} and the diagonalization \eqref{eq:Gnaught} the jump condition \eqref{P-jump-circle} becomes
\begin{equation}
\mathbf{P}^{[n]}_+(\zeta;x_0,t_0) = \mathbf{P}^{[n]}_-(\zeta;x_0,t_0)
\mathbf{Q}\left(  \frac{\lambda - \xi}{\lambda-\xi^*} \right)^{n \sigma_3}\mathbf{Q}^{-1}
,\quad \zeta \in \Sigma_\circ,
\end{equation}
Since the diagonalization \eqref{eq:Gnaught} via conjugation by the constant matrix $\mathbf{Q}$ preserves the normalization as $\lambda\to\infty$, it is easy to see that
\begin{equation}
\mathbf{P}^{[n]}(\lambda;x_0,t_0) = 
\begin{cases}
\mathbb{I},&\quad |\lambda|<r\\
\mathbf{Q}\left(  \frac{\lambda - \xi}{\lambda-\xi^*} \right)^{n \sigma_3}\mathbf{Q}^{-1},&\quad |\lambda|>r.
\end{cases} 
\end{equation}
Therefore,
$\mathbf{P}^{[n]}(\lambda;x_0,t_0) = \mathbb{I} - 2\ii \beta n [\mathbf{Q}\sigma_3 \mathbf{Q}^{-1}]\lambda^{-1} + O(\lambda^{-2})$ as $\lambda\to \infty$,
hence
\begin{equation}
P^{[n]}_{12}(\lambda;x_0,t_0) = -4\ii\beta  \frac{ c_1 c_2^*}{\|\mathbf{c}\|^2}n \lambda^{-1} + O(\lambda^{-2}),\quad \lambda\to \infty.
\end{equation}
From this and \eqref{eq:qn-q0} the result follows.
\end{proof}
\begin{remark}
Note that Proposition~\ref{prop:qn-growth} does not necessarily imply that $|q_n(x_0,t_0)|$ is a maximum value of $|q_n(x,t)|$ in $(x,t)\in\mathbb{R}^2$ for given $n\in\mathbb{Z}_{>0}$. 
{Nevertheless, it provides a generalization of the result for the peak amplitude of fundamental rogue waves given in \cite[Proposition 2]{BilmanLM2020} to the superposition of coherent structures on a suitably arbitrary background field $q_0(x,t)$; it also generalizes the same result obtained in \cite[Theorem 1]{WangYWH2017} by algebraic methods. The result \eqref{qn-amplitude} also gives the amplitude of the multiple-pole solitons of order $2n$ characterized by \cite[Riemann-Hilbert Problem 1 and Eqn.\@ (1.5)]{BilmanB2019} at the point $(x_0,t_0)$ (taken to be $(0,0)$ in that reference) if the background $q_0$ is taken to be the zero background $q_0\equiv 0$, in which case $\mathbf{F}(\lambda;x,t) = \ee^{-\ii(\lambda(x-x_0) + \lambda^2(t-t_0))\sigma_3}$.
The point of Proposition~\ref{prop:qn-growth} is that the amplitude of the deviation $|q_n(x,t) - q_0(x,t)|$ from the background field at $(x,t)=(x_0,t_0)$ grows proportionally to $n$ as $n$ increases although $(x_0,t_0)$ may not be the location of the peak of the wave profile.
}
\end{remark}
%}
\begin{remark}
The robust version of the inverse-scattering transform was introduced in \cite{BilmanM2019} in the context of an initial-value problem for the nonlinear Schr\"odinger equation \eqref{nls} posed for $x\in\mathbb{R}$ and $t>0$. Thus, the jump matrix supported on $\Sigma_\circ$ was given in terms of scattering data computed at $t=0$ from a given suitable initial condition. Accordingly, the analogues of \eqref{V-0} in \cite{BilmanM2019} have $t_0=0$ (and $x_0=L$, arbitrary). The iterated Darboux transformation scheme introduced in this context in \cite{BilmanM2019} was also devised to place the peaks of fundamental rogue waves at $(x,t)=(L,0)$, where $L$ was taken to be $0$ without loss of generality.
This choice was retained first in the study of the near-field behavior of the \emph{extreme superposition} of fundamental rogue waves by the authors in \cite{BilmanLM2020} and also in its adaptation to solitons in \cite{BilmanB2019}. 
In our work, the starting point is a global solution $q_0(x,t)$ of \eqref{nls} provided by the solution of \rhref{rhp:background} and consequently, we may use any point $(x_0,t_0)$ in the $(x,t)$-plane as a normalization point in the sense of \eqref{F-normalization}. As the reader shall see in Section~\ref{sec:conclusion}, this flexibility results in a further, two-fold universality.
\label{rem:point-L}
\end{remark}

\subsection{Extreme superposition and passage to the limit}
\label{s:extreme}
Observe that we may write
\begin{equation}
\mathbf{F}(\lambda;x,t) = \mathbf{U}(\lambda;x,t)  \mathbf{U}(\lambda;x_0,t_0)^{-1} = \mathbf{M}(\lambda;x,t)\ee^{- \ii \lambda((x -x_0) + \lambda (t-t_0))\sigma_3}\mathbf{M}(\lambda;x_0,t_0)^{-1},
\end{equation}
and the jump condition \eqref{P-jump-circle} can be expressed in terms of the matrix $\mathbf{M}$ associated with the background field $q_0$ as follows
\begin{multline}
\mathbf{P}^{[n]}_+(\zeta;x,t) = \mathbf{P}^{[n]}_-(\zeta;x,t)
 \mathbf{M}(\zeta;x,t)\ee^{-\ii \zeta((x - x_0) + \zeta (t-t_0))\sigma_3}\mathbf{M}(\zeta;x_0,t_0)^{-1}
 \\
  \times \mathbf{G}^{[0]}(\zeta;x_0,t_0)^n\mathbf{M}(\zeta;x_0,t_0)\ee^{\ii \zeta((x - x_0) + \zeta (t-t_0))\sigma_3}\mathbf{M}(\zeta;x,t)^{-1},\quad \zeta \in \Sigma_\circ.
\end{multline}
Now we introduce new variables as follows:
\begin{equation}
\Lambda:=n^{-1}\lambda,\quad X:=n(x-x_0),\quad T:=n^2(t-t_0).
\end{equation}
Since the radius of the circular contour $\Sigma_\circ$ can be taken to be arbitrarily large, we choose it to coincide with the unit circle in the $\Lambda$-plane.  Then defining also $\mathbf{R}^{[n]}(\Lambda;X,T):=\mathbf{P}^{[n]}(n\Lambda;x_0+n^{-1}X,t_0+n^{-2}T)$, we see that $\mathbf{R}^{[n]}(\Lambda;X,T)$ is analytic in $\Lambda$ for $|\Lambda|\neq 1$, and with the unit circle taken with clockwise orientation, the jump condition for $\mathbf{R}^{[n]}(\Lambda;X,T)$ reads
\begin{multline}
\mathbf{R}_+^{[n]}(\Xi;X,T)=\mathbf{R}_-(\Xi;X,T)\mathbf{M}(n\Xi;x_0+n^{-1}X,t_0+n^{-2}T)\ee^{-\ii\Xi(X+\Xi T)\sigma_3}\mathbf{M}(n\Xi;x_0,t_0)^{-1}
\\
{}\times\mathbf{G}^{[0]}(n\Xi;x_0,t_0)^n\mathbf{M}(n\Xi;x_0,t_0)\ee^{\ii\Xi(X+\Xi T)\sigma_3}\mathbf{M}(n\Xi;x_0+n^{-1}X,t_0+n^{-2}T)^{-1},\quad |\Xi|=1.
\end{multline}
According to Proposition~\ref{prop:independence}, $\mathbf{M}(n\Xi;x_0,t_0)\to \mathbb{I}$ as $n\to\infty$ uniformly for $|\Xi|=1$.  Likewise, writing
\begin{equation}
\mathbf{M}(n\Xi;x_0+n^{-1}X,t_0+n^{-2}T)=[\mathbf{M}(n\Xi;x_0+n^{-1}X,t_0+n^{-2}T)-\mathbf{M}(n\Xi;x_0,t_0)]+\mathbf{M}(n\Xi;x_0,t_0),
\end{equation}
the first term tends to zero as $n\to\infty$ for $(X,T)$ in any bounded set according to Proposition~\ref{prop:continuity}, and the second term tends to $\mathbb{I}$ as $n\to\infty$ uniformly for $|\Xi|=1$ by Proposition~\ref{prop:independence}.  Finally, using \eqref{eq:Gnaught} shows that
\begin{equation}
\mathbf{G}^{[0]}(n\Xi;x_0,t_0)^n = \mathbf{Q}\left(\frac{n\Xi-\xi}{n\Xi-\xi^*}\right)^{n\sigma_3}\mathbf{Q}^{-1} \to\mathbf{Q}\ee^{-2\ii\beta\Xi^{-1}\sigma_3}\mathbf{Q}^{-1},\quad n\to\infty
\end{equation}
holds uniformly for $|\Xi|=1$, where we recall the notation $\beta:=\mathrm{Im}(\xi)>0$.  The limiting jump condition has the Schwarz symmetry necessary for the vanishing lemma to apply, and the first moment of the solution of the limiting problem is easily seen to satisfy the nonlinear Schr\"odinger equation in the variables $(X,T)$ by a dressing argument; hence by $L^2(\Sigma_\circ)$ small-norm theory for Riemann-Hilbert problems \cite[Appendix B]{KamvissisMM2003} the following theorem is proved.
\begin{theorem}
\label{thm:main}
Let $q_0(x,t)$ be an arbitrary background potential solving the focusing nonlinear Schr\"odinger equation in the form \eqref{nls} and obtained from\footnote{Or obtained from an arbitrary solution of \eqref{nls} satisfying nonzero boundary conditions as described in Remark~\ref{rem:nonzero-background}.} \rhref{rhp:background}, and let $q_n(x,t)$ denote the result of the $n$-fold Darboux transformation given by \eqref{q-n-recovery} or \eqref{eq:qn-q0}, and defined in terms of parameters $(x_0,t_0)\in\mathbb{R}^2$, $\xi\in\mathbb{C}$ with $\mathrm{Im}(\xi)=\beta>0$, and  $\mathbf{c}\in\mathbb{CP}^1$ (see \eqref{eq:Gnaught}).  Then uniformly for $(X,T)$ in any bounded set,
\begin{equation}
\label{Q-def}
\lim_{n\to\infty}n^{-1}q_n(x_0+n^{-1}X,t_0+n^{-2}T) = Q(X,T):=2\ii\lim_{\Lambda\to\infty}\Lambda R_{12}(\Lambda;X,T),
\end{equation}
where $\mathbf{R}(\Lambda;X,T)$ is analytic for $|\Lambda|\neq 1$, $\mathbf{R}(\Lambda;X,T)\to\mathbb{I}$ as $\Lambda\to\infty$, and where $\mathbf{R}(\Lambda;X,T)$ takes analytic boundary values on the unit circle with clockwise orientation that are related by the jump condition
\begin{equation}
\label{R-jump}
\mathbf{R}_+(\Xi;X,T)=\mathbf{R}_-(\Xi;X,T)\ee^{-\ii\Xi(X+\Xi T)\sigma_3}\mathbf{Q}\ee^{-2\ii\beta\Xi^{-1}\sigma_3}\mathbf{Q}^{-1}\ee^{\ii\Xi(X+\Xi T)\sigma_3},\quad |\Xi|=1.
\end{equation}
The matrix function $\mathbf{R}(\Lambda;X,T)$ is uniquely determined by these conditions, and $Q(X,T)$ is a solution of the focusing nonlinear Schr\"odinger equation in the form
\begin{equation}
\ii Q_T + \frac{1}{2}Q_{XX}+|Q|^2Q=0,\quad (X,T)\in\mathbb{R}^2.
\end{equation}
\end{theorem}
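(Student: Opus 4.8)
The plan is to pass to the limit directly in the rescaled perturbative matrix $\mathbf{R}^{[n]}(\Lambda;X,T):=\mathbf{P}^{[n]}(n\Lambda;x_0+n^{-1}X,t_0+n^{-2}T)$, which solves a Riemann--Hilbert problem on the fixed unit circle $|\Lambda|=1$ with the jump displayed just above. I would show that this jump matrix converges uniformly on $|\Xi|=1$, and locally uniformly in $(X,T)$, to the jump in \eqref{R-jump}; once this is in hand, a perturbation (small-norm) argument transfers the convergence of jumps into uniform convergence $\mathbf{R}^{[n]}\to\mathbf{R}$ and into convergence of the coefficient of $\Lambda^{-1}$ at infinity, which by \eqref{eq:qn-q0} is exactly what is needed to identify $\lim_n n^{-1}q_n$ with $Q(X,T)$.

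The convergence of the jump factors is the computational core. The outer factors $\mathbf{M}(n\Xi;x_0+n^{-1}X,t_0+n^{-2}T)$ and their inverses tend to $\mathbb{I}$ uniformly: indeed $\mathbf{M}(n\Xi;x_0,t_0)\to\mathbb{I}$ uniformly for $|\Xi|=1$ by Proposition~\ref{prop:independence} (this is precisely the uniform-in-direction normalization, applied along $\lambda=n\Xi\to\infty$), and writing
\begin{equation*}
\mathbf{M}(n\Xi;x_0+n^{-1}X,t_0+n^{-2}T)=\big[\mathbf{M}(n\Xi;x_0+n^{-1}X,t_0+n^{-2}T)-\mathbf{M}(n\Xi;x_0,t_0)\big]+\mathbf{M}(n\Xi;x_0,t_0),
\end{equation*}
the bracketed term vanishes by Proposition~\ref{prop:continuity} (its supremum over all $\lambda$ is controlled, and $(x_0+n^{-1}X,t_0+n^{-2}T)\to(x_0,t_0)$ for bounded $(X,T)$), while the last term again tends to $\mathbb{I}$. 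The only genuinely new estimate is for the diagonalized power: by \eqref{eq:Gnaught},
\begin{equation*}
\mathbf{G}^{[0]}(n\Xi;x_0,t_0)^n=\mathbf{Q}\left(\frac{n\Xi-\xi}{n\Xi-\xi^*}\right)^{n\sigma_3}\mathbf{Q}^{-1},
\end{equation*}
and since $n\log\!\big((n\Xi-\xi)/(n\Xi-\xi^*)\big)=-(\xi-\xi^*)\Xi^{-1}+O(n^{-1})=-2\ii\beta\Xi^{-1}+O(n^{-1})$ uniformly for $|\Xi|=1$ (the denominators staying bounded away from $0$ there), this factor tends to $\mathbf{Q}\ee^{-2\ii\beta\Xi^{-1}\sigma_3}\mathbf{Q}^{-1}$ uniformly. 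Because the conjugating exponentials $\ee^{\pm\ii\Xi(X+\Xi T)\sigma_3}$ are bounded uniformly for $|\Xi|=1$ and $(X,T)$ in a bounded set, multiplying the convergent factors gives the asserted uniform convergence of the whole jump matrix to that of \eqref{R-jump}.

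To convert jump convergence into solution convergence I would first establish well-posedness of the limiting problem. The matrix $\mathbf{Q}$ in \eqref{eq:Gnaught} is unitary with unit determinant, so the limiting jump matrix enjoys the Schwarz symmetry required by Zhou's vanishing lemma; this furnishes a unique solution $\mathbf{R}(\Lambda;X,T)$ with $\det\mathbf{R}\equiv1$, bounded on the contour with bounded inverse. Setting $\mathbf{E}^{[n]}:=\mathbf{R}^{[n]}\mathbf{R}^{-1}$, the matrix $\mathbf{E}^{[n]}$ solves a Riemann--Hilbert problem on $|\Lambda|=1$ whose jump is conjugated by the bounded boundary values of $\mathbf{R}$ and is close to the identity precisely to the extent that the $\mathbf{R}^{[n]}$ and $\mathbf{R}$ jumps are close; the small-norm theory then yields $\mathbf{E}^{[n]}\to\mathbb{I}$ uniformly together with convergence of its $\Lambda^{-1}$ coefficient. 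Since $P^{[n]}_{12}(n\Lambda;x_0+n^{-1}X,t_0+n^{-2}T)=R^{[n]}_{12}(\Lambda;X,T)$, relation \eqref{eq:qn-q0} reads
\begin{equation*}
n^{-1}q_n(x_0+n^{-1}X,t_0+n^{-2}T)=n^{-1}q_0(x_0+n^{-1}X,t_0+n^{-2}T)+2\ii\lim_{\Lambda\to\infty}\Lambda R^{[n]}_{12}(\Lambda;X,T);
\end{equation*}
the background term is $O(n^{-1})$ since $q_0$ is continuous, and the moment converges to $2\ii\lim_{\Lambda\to\infty}\Lambda R_{12}=Q(X,T)$. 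Finally, that $Q$ solves focusing NLS in $(X,T)$ follows from the standard dressing computation, since the limiting jump depends on $(X,T)$ only through the Lax-pair factors $\ee^{\pm\ii\Xi(X+\Xi T)\sigma_3}$.

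The principal obstacle is \emph{uniformity}. Because the factors $\mathbf{M}(n\Xi;\cdot)$ are evaluated at $\lambda=n\Xi$ receding to infinity, mere pointwise decay of $\mathbf{M}-\mathbb{I}$ would not suffice; the mechanism that makes the limit work is exactly the uniform-in-direction normalization of Proposition~\ref{prop:independence} together with the fact that Proposition~\ref{prop:continuity} controls the $(x,t)$-increment of $\mathbf{M}$ uniformly over all $\lambda$. These two propositions, established earlier through Fredholm/Beals--Coifman theory on the compactified contour, are the technical heart of the matter; granting them, the passage to the limit reduces to the routine estimates above.
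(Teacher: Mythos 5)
Your proposal follows essentially the same route as the paper's own proof: the same rescaling $\Lambda=n^{-1}\lambda$, $X=n(x-x_0)$, $T=n^{2}(t-t_0)$ with $\Sigma_\circ$ chosen as the unit circle in the $\Lambda$-plane, the same splitting of the $\mathbf{M}$-factors handled by Propositions~\ref{prop:independence} and~\ref{prop:continuity}, the same uniform limit $\mathbf{G}^{[0]}(n\Xi;x_0,t_0)^n\to\mathbf{Q}\ee^{-2\ii\beta\Xi^{-1}\sigma_3}\mathbf{Q}^{-1}$, and the same concluding appeal to the vanishing lemma, small-norm theory, and a dressing argument. Your write-up is correct and even makes explicit a few details the paper leaves implicit, namely the error matrix $\mathbf{E}^{[n]}=\mathbf{R}^{[n]}\mathbf{R}^{-1}$, the unitarity of $\mathbf{Q}$ underlying the Schwarz symmetry, and the $O(n^{-1})$ background term arising from \eqref{eq:qn-q0}.
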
  
{
The function $Q(X,T)$ is a generalization of the \emph{rogue wave of infinite order} \cite{BilmanLM2020}.  It depends parametrically on $\mathbf{c}\in\mathbb{CP}^1$ (via $\mathbf{Q}$) and $\beta>0$.  The dependence on $\beta$ can be easily scaled out by the similarity transformation $Q(X,T)\mapsto \beta^{-1}Q(\beta^{-1}X,\beta^{-2}T)$, however the dependence on $\mathbf{c}\in\mathbb{CP}^1$ is nontrivial.  In general, $Q(X,T)$ is a highly transcendental function, satisfying differential equations in the Painlev\'e-III hierarchy in $X$ and $T$ independently (see \cite{BilmanB2019,BilmanLM2020}).  However when $c_1c_2=0$ the situation simplifies.
Indeed, we note that the jump matrix in \eqref{R-jump} equals $\ee^{2\ii\beta\Xi^{-1}\sigma_3}$ if $c_1=0$ and equals $\ee^{-2\ii\beta\Xi^{-1}\sigma_3}$ if $c_2=0$. In both cases $\mathbf{R}(\Lambda;X,T)$ can be determined explicitly and is equal to the corresponding diagonal jump matrix for $|
\Lambda|>1$. It then follows from \eqref{Q-def} that $Q(X,T)\equiv 0$. There could be different reasons for this degeneration in Theorem~\ref{thm:main} in view of Remark~\ref{rem:c1-c2-0}. First, it could be that $q_n(x,t)= q_0(x,t)$ for all $(x,t)\in\mathbb{R}^2$ for each $n=1,2,\ldots$, when the construction is performed with $c_1=0$ or $c_2=0$, just like the example of solitons on the zero background given in Remark~\ref{rem:c1-c2-0}. More interestingly, it could be that the solution $q_n(x,t)$ is nontrivial and $|q_n(x,t)|$ grows proportionally to $n$ as $n\to +\infty$ at a (possibly $n$-dependent) location whose (possibly $n$-dependent) distance to $(x_0,t_0)$ fails to be captured by the rate at which we are zooming in at the point $(x_0,t_0)$ in \eqref{Q-def}. Alternatively, it could be that the aforementioned peak amplitude grows at a rate $o(n)$ as $n\to+\infty$ even if the location is captured by the rate at which we are zooming in at the point $(x_0,t_0)$.
}
{
\subsection{Conclusion}
\label{sec:conclusion}
We established that generation of rogue waves of infinite order exhibits a universal character in two ways. They can be generated on given suitably arbitrary background solution $q_0(x,t)$ of the focusing NLS equation \eqref{nls} and at an arbitrary point $(x_0,t_0)$ of the spacetime domain of that background field. This generalizes the emergence of the special solution $Q=\Psi^{\pm}(X,T)$ in the near-field behavior of fundamental rogue waves on a Stokes wave background $q_0\equiv \ee^{\ii t}$ identified in \cite{BilmanLM2020} and in the near-field behavior of solitons on the zero background in the limit of large order identified in \cite{BilmanB2019} to arbitrary background fields and locations. Our result also complements the universality of this near-field behavior the authors established recently in \cite{BilmanM2021} for background fields described by \cite[Riemann-Hilbert Problem 2]{BilmanM2021} with $0<M<\frac{1}{2}$, $M\neq \frac{1}{4}$, in the notation of that reference.   We emphasize that, while we aimed to capture many different types of background solutions $q_0(x,t)$ to \eqref{nls} in the setting of \rhref{rhp:background}, both the iterated B\"acklund transformations and the resulting convergence obtained in Theorem~\ref{thm:main} are of a local character with respect to $(x,t)$ near $(x_0,t_0)$ and as such it is likely that the theorem holds true for a much broader class of  background solutions defined in a neighborhood of the chosen point $(x_0,t_0)$.
}

\end{document}